\newcommand{\bm}{\mathbf} 
\newcommand{\be}{\begin{equation}}
\newcommand{\ee}{\end{equation}}
\newcommand{\bse}{\begin{subequations}}
\newcommand{\ese}{\end{subequations}}
\newcommand{\bea}{\begin{eqnarray}}
\newcommand{\eea}{\end{eqnarray}}
\newcommand{\x}{{\bm x}}
\newcommand{\q}{{\bm q}}
\newcommand{\ba}{{\bm a}}
\newcommand{\bb}{{\bm b}}
\newcommand{\bc}{{\bm c}}
\newcommand{\br}{{\bm r}}
\newcommand{\bA}{{\bm A}}
\newcommand{\bF}{{\bf F}}
\newcommand{\bD}{{\bf D}}
\newcommand{\bB}{{\bf B}}
\newcommand{\bG}{{\bf G}}
\newcommand{\bH}{{\bf H}}
\newcommand{\bg}{{\bf g}}
\newcommand{\bh}{{\bf h}}
\newcommand{\bd}{{\bf d}}
\newcommand{\bff}{{\bf f}}
\newcommand{\bzero}{{\bf 0}}
\newcommand{\eye}{{\bm I}}
\newcommand{\I}{{\bm I }}
\newcommand{\BG}{{\boldsymbol{\mathcal G}}}
\newcommand{\BW}{{\boldsymbol{\mathcal W}}}
\newcommand{\bdelta}{\mbox{\boldmath$\delta$}}
\newcommand{\bGamma}{\mbox{\boldmath$\Gamma$}}
\newcommand{\bPsi}{\mbox{\boldmath$\Psi$}}
\newcommand{\bnu}{\mbox{\boldmath$\nu$}}
\newcommand{\bxi}{\mbox{\boldmath{$\xi$}}}
\newcommand{\bPhi}{\mbox{\boldmath{$\Phi$}}}
\newcommand{\brho}{\mbox{\boldmath{$\rho$}}}
\newcommand{\trace}{\mathrm{tr}}
\newcommand{\sinr}{\mathrm{SINR}}
\newcommand{\mrc}{\mathtt{MRC}}
\newcommand{\zf}{\mathtt{ZF}}
\newcommand{\trzf}{\mathtt{TR\text{-}ZF}}
\newcommand{\tr}{\mathtt{TR}}
\newcommand{\trmrc}{\mathtt{TR\text{-}MRC}}
\newcommand{\cpofdm}{\mathtt{CP\text{-}OFDM}}
\newcommand{\Toeplitz}{\mathlarger{\mathcal{T}}}
\theoremstyle{definition}
\newtheorem{proposition}{Proposition}
\newtheorem{remark}{Remark}
\newcommand{\xbar}[1]{\mkern 3.5mu\overline{\mkern-3.5mu#1\mkern-3.5mu}\mkern 3.5mu}
\title{OFDM Without CP in Massive MIMO}
\author{\normalsize Amir Aminjavaheri, Arman Farhang, Ahmad RezazadehReyhani,  Linda E. Doyle, and Behrouz Farhang-Boroujeny 

\thanks{Parts of the concepts based on which the contents of this paper are built have been presented in \cite{OFDMnoCPconf}.}

\thanks{A.~Aminjavaheri, A.~RezazadehReyhani and B.~Farhang-Boroujeny are with the Electrical and Computer Engineering Department, University of Utah, Salt Lake City, USA (e-mail: \{aminjav, rezazade, farhang\}@ece.utah.edu).}

\thanks{A.~Farhang and L.~E.~Doyle are with the CTVR/CONNECT, The Telecommunications Research Centre, Trinity College Dublin, Ireland, Dublin2 (e-mail: \{farhanga, ledoyle\}@tcd.ie).}  

\thanks{This publication has emanated from research supported in part by a research grant from Science Foundation Ireland (SFI) and is co-funded under the European Regional Development Fund under Grant Number 13/RC/2077.} }
\begin{document}

\maketitle

\begin{abstract}
We study the possibility of removing the cyclic prefix (CP) overhead from orthogonal frequency division multiplexing (OFDM) in massive multiple-input multiple-output (MIMO) systems. We consider the uplink transmission while our results are applicable to the downlink as well. The absence of CP increases the spectral efficiency in expense of intersymbol interference (ISI) and intercarrier interference (ICI). It is known that in massive MIMO, the effects of uncorrelated noise and multiuser interference vanish as the number of base station (BS) antennas tends to infinity.  To investigate if the channel distortions in the absence of CP fade away, we study the performance of the standard maximum ratio combining (MRC) receiver. Our analysis reveals that in this receiver, there always remains some residual interference leading to saturation of signal-to-interference-plus-noise ratio (SINR). To resolve this problem, we propose to use the time reversal (TR) technique. Moreover, in order to further reduce the multiuser interference, we propose a zero-forcing equalization to be deployed after the TR combining. We compare the achievable rate of the proposed system with that of the conventional CP-OFDM. We show that in realistic channels, a higher spectral efficiency is achieved by removing the CP from OFDM, while reducing the computational complexity.
\end{abstract}

\begin{IEEEkeywords}
massive MIMO, OFDM, cyclic prefix, time reversal, interference cancellation, spectral efficiency
\end{IEEEkeywords}

\section{Introduction}\label{sec:Introduction}

\IEEEPARstart{M}{assive} multiple-input multiple-output (MIMO) is a multiuser technique enabling the users to simultaneously utilize the same resources in time and frequency. Massive MIMO significantly improves the capacity of the multiuser networks, making it a strong candidate technology for the fifth generation (5G) of cellular networks and a topic of interest for the research community, \cite{Marzetta2010, Larsson2014, Rusek2013}. In cases where the number of base station (BS) antennas is much larger than the number of users, optimal performance can be achieved through the most straightforward detection/precoding techniques, namely, maximum ratio combining/transmission, \cite{Marzetta2010}.

In the massive MIMO context, orthogonal frequency division multiplexing (OFDM) with cyclic prefix (CP) is particularly attractive because it enables the conversion of the frequency-selective channels between each mobile terminal (MT) antenna and the BS antennas into a set of flat-fading channels over each subcarrier band. Therefore, the MT data streams can be distinguished from each other through the respective channel responses. Hence, most of the literature deals with massive MIMO while utilizing OFDM with CP (CP-OFDM) \cite{Marzetta2010,Hoydis2013,ngo2013energy,Rusek2013}. However, the CP duration adds an extra overhead to the network and reduces the spectral efficiency. Therefore, in order to increase the transmission rate, it is desirable to eliminate the CP duration from OFDM. However, this comes at the expense of intersymbol interference (ISI) and intercarrier interference (ICI), imposed by the multipath channel. Here, it is worth mentioning that in massive MIMO, the effects of uncorrelated noise as well as various types of interference/imperfections such as multiuser interference (MUI), imperfect channel state information, hardware imperfections, phase noise, etc., will vanish as the number of BS antennas grows large \cite{Rusek2013,ngo2013energy,Bjornson2015,Pitarokoilis2015}. Therefore, the core question at the heart of this paper is:  

\emph{``Can massive MIMO average out the ISI and ICI introduced by the multipath channel in OFDM without CP?''}

There are a number of methods in the literature tackling the ISI and ICI problem of OFDM with insufficient CP, \cite{Molisch2007,toeltsch2000efficient,lim2006mimo,chen2009low,ma2009two}. References \cite{Molisch2007} and \cite{toeltsch2000efficient} suggest to remove the effect of ICI and ISI by utilizing the previously detected symbols and using successive interference cancellation (SIC). In \cite{lim2006mimo} and \cite{chen2009low}, a MIMO-OFDM scenario is considered, and iterative interference cancellation using turbo equalization is proposed. The authors in \cite{ma2009two} propose an interference cancellation algorithm based on some structural properties obtained from shifting the received OFDM blocks. We note that the above methods are designed for the conventional OFDM (or MIMO-OFDM) scenarios, and do not take advantage of the excessive number of BS antennas in a massive MIMO setup. In \cite{nsengiyumva2016cyclic}, the authors consider the conventional frequency-domain combining methods and deploy computer simulations to show that the CP duration can be shortened to achieve a higher spectral efficiency in a massive MIMO system. However, no detailed mathematical analysis of the proposed approach is presented.

In this paper, to investigate if the channel distortions (i.e., ISI and ICI) in the absence of CP fade away as the number of BS antennas grows large, we first study the performance of the conventional frequency-domain combining methods, such as maximum ratio combining (MRC), zero forcing (ZF), and minimum mean square error (MMSE) detectors, \cite{ngo2013energy}. We mathematically analyze the signal-to-interference-plus-noise ratio (SINR) performance of the above detectors when the CP is removed from the OFDM signal. Our SINR analysis reveals that when the above combining methods are applied, the channel distortions arising from the absence of CP, i.e., ISI and ICI, do not average out as the number of BS antennas tends to infinity. Thus, SINR saturates at a certain deterministic level and arbitrarily large SINR values cannot be achieved by increasing the BS array size.

To resolve the saturation issue, we propose to use a technique known as \emph{time reversal} (TR) to combine/precode the signals of different BS antennas in the \emph{time domain} instead of the \emph{frequency domain}. This technique is based on a pivotal phenomenon in physics that harnesses the principle of channel reciprocity and multipath effects to concentrate the signal energy at a certain point in space (\textit{spatial focusing}) and compress the channel impulse response in the time domain (\textit{temporal focusing}). This spatial-temporal focusing effect mitigates the ISI, ICI, and MUI, \cite{7497585}. Time reversal has been extensively studied and utilized in underwater acoustic channels, e.g., \cite{Edelmann2005,Gomes2008,gomes2004time,Rouseff2001}. In \cite{Zhiqiang2012} and \cite{liu2014overhead}, the authors utilize the temporal focusing property of TR and propose a CP length design method to satisfy specific performance requirements in underwater acoustic channels. The authors, consequently, balance the trade-off between the CP length and the resulting interference due to the residual ISI and ICI imposed by the insufficient length of CP. The scope of \cite{Zhiqiang2012} and \cite{liu2014overhead} is limited to small-scale underwater acoustic networks without any consideration of multiuser scenarios. Recently, there has been an emerging interest in the application of TR for the future generation of wireless networks, \cite{Chen2016}. Moreover, the application of TR to massive MIMO in the context of single-carrier transmission has been studied extensively, e.g., \cite{Pitarokoilis2012,Pitarokoilis2015,han2012time}. Application of TR to CP-OFDM has been studied in \cite{Dubois2013,Maaz2015,dubois2013performance}, where the authors consider a single-user massive MIMO scenario and show that TR can be applied to a CP-OFDM system either in the time or in the frequency domain. Moreover, the authors show that TR allows the CP length to be reduced thanks to its spatial-temporal focusing property.

As it is shown in \cite{Pitarokoilis2012} for the case of single-carrier transmission, with the TR technique, the channel distortions tend to zero as the number of BS antennas goes to infinity. We show that this result is also applicable to the case of OFDM without CP transmission. Thus, arbitrarily large SINR values can be achieved by increasing the BS array size. However, as we show in this paper, the performance of the conventional TR is limited due to the excessive amount of MUI when the number of BS antennas is finite. We show that OFDM allows for a straightforward zero-forcing equalization to be utilized after the TR combining. With this approach, the MUI level is significantly reduced and larger SINR values can be achieved compared to the conventional TR method, while the SINR saturation problem is also avoided. Throughout the paper, we refer to the conventional TR technique as \emph{TR-MRC}, while the proposed TR-based method with additional ZF equalization is referred to as \emph{TR-ZF}.

It is worth mentioning that in a typical communication system, a time period with duration equal to the coherence time of the channel is divided into two intervals: (i) training period, and (ii) data transmission period. In this paper, we only focus on the data transmission period and consider removing the CP overhead during this period. Throughout the paper, we consider perfect knowledge of the channel state information (CSI) at the BS and assume that the CP is included in the course of training to establish the carrier frequency and timing synchronization and obtain an accurate CSI. Studying the problem of CP removal/shortening during the training interval in the context of massive MIMO remains for the future study.

Also, in this paper, we focus on the uplink transmission, but the results and algorithms are trivially applicable to the downlink as well. We analytically derive the SINR performance of the TR-MRC receiver as well as our proposed TR-ZF technique. Based on our SINR derivations, we obtain a lower bound on the achievable information rate for both the TR-MRC and TR-ZF receivers. We show that higher spectral efficiency can be achieved using OFDM without CP as compared to CP-OFDM. More specifically, we show that using TR-MRC and TR-ZF techniques in OFDM without CP, higher information rate is achievable as compared to the case of CP-OFDM with the conventional MRC and ZF detection methods, respectively. Furthermore, we analyze the computational complexity of both TR-MRC and TR-ZF methods and introduce computationally efficient ways to implement them. We show that while the complexity of TR-MRC is almost similar to the frequency-domain MRC, a significantly lower complexity is obtained when utilizing the TR-ZF equalizer as compared to the conventional ZF detector.

To summarize, we list the contributions of this paper as follows:
\begin{itemize}
\item To increase the spectral efficiency of massive MIMO systems, we show that the CP overhead can be successfully eliminated. This is a result of the coherent combining of the received signals at the BS antennas that yields the channel distortions to disappear as the BS array size increases.
\item We show that in the absence of the CP, the SINR performance of the conventional frequency-domain combining methods, i.e., MRC, ZF, and MMSE, saturates at a certain deterministic level. Hence, arbitrarily large SINR values cannot be achieved by increasing the array size at the BS.
\item We propose to use the TR technique to resolve the above saturation problem. 
\item Although the conventional TR technique can achieve reasonable SINR values and is a viable option in many scenarios, it suffers from a high level of MUI in multiuser cases. We propose a novel ZF equalization technique to be applied after the TR operation to reduce the MUI level.
\item We introduce efficient methods to minimize the computational cost of both TR-MRC and TR-ZF receivers. We also compare the complexity of the proposed receiver structures with that of the conventional CP-OFDM with MRC and ZF detectors.
\item We perform a thorough analysis and obtain closed-form expressions for the SINR and achievable rate performance of both TR-MRC and TR-ZF receivers.
\end{itemize}

The rest of the paper is organized as follows. After presenting the system model in Section~\ref{sec:SystemModel}, we discuss the saturation problem of the conventional frequency-domain combiners that arises in the absence of CP in Section \ref{sec:ConventionalEq}. The TR technique is introduced as a remedy to this problem in Section \ref{sec:TR}, where we also propose a novel ZF post-equalization to further reduce the MUI. In Section \ref{sec:implementation}, we present a complexity analysis of the receiver structures that are introduced in this paper, and compare them with that of the conventional CP-OFDM. The asymptotic performance, in terms of SINR and achievable rate, of the TR-MRC and the proposed TR-ZF receivers is analyzed in Section \ref{sec:tr_sinr}. Our discussions in this paper are numerically evaluated in Section \ref{sec:numerical_results}. Finally, we conclude the paper in Section \ref{sec:conclusion}.

\textit{Notations:} Matrices, vectors and scalar quantities are denoted by boldface uppercase, boldface lowercase, and normal letters, respectively. $[\bA]_{mn}$ represents the element in the $m^{\rm{th}}$ row and $n^{\rm{th}}$ column of $\bA$ and $\bA^{-1}$ signifies the inverse of $\bA$. $\I_M$ is the identity matrix of size $M\times M$, and $\bzero_{M\times N}$ is the zero matrix of size $M \times N$. The matrix trace operation is denoted by $\trace \{\cdot \}$. $\bD={\rm diag}\{\ba\}$ represents a diagonal matrix whose diagonal elements are formed by the elements of the vector $\ba$. The superscripts $(\cdot)^{\rm T}$, $(\cdot)^{\rm H}$ and $(\cdot)^\ast$ indicate transpose, conjugate transpose, and conjugate operations, respectively.  The linear convolution is denoted by $\star$. $\mathbb{E}\{\cdot\}$ denotes the expected value of a random variable. The notation $\mathcal{CN}(0,\sigma^2)$ represents the circularly-symmetric and zero-mean complex normal distribution with the variance of $\sigma^2$. Throughout the paper, frequency-domain variables are signified by over-bar accent. 

\section{System Model}\label{sec:SystemModel}

We consider a large-scale multiuser MIMO system similar to the one discussed in \cite{Marzetta2010}. For the sake of simplicity, only a single-cell scenario is considered. Also, in this paper, only the case of uplink transmission is discussed but the results and algorithms are trivially applicable to the case of downlink transmission as well. We consider $K$ mobile terminals that are simultaneously communicating with a BS which is equipped with an array of $M$ antenna elements. Each MT is a single-antenna device. In this paper, we consider an asymptotic regime where the number of BS antennas $M$ tends to infinity.

We consider a discrete-time model for our analysis. A similar model is also considered in \cite{Molisch2007,toeltsch2000efficient,lim2006mimo,chen2009low} to study OFDM without CP or with insufficient CP. Let $x_k(l)$ represent the transmit signal of $k^{\rm th}$ terminal in discrete time. Thus, the received signal at the $m^{\rm th}$ BS antenna can be obtained as
\be\label{eqn:rm} 
r_m(\ell) = \sum^{K-1}_{k=0} x_k(\ell) \star h_{m,k}(\ell) + \nu_m(\ell),
\ee
where $\nu_m(\ell)$ is the complex additive white Gaussian noise (AWGN) at the input of $m^{\rm th}$ BS antenna, and is distributed according to $\nu_m(\ell)~{\sim}~{\mathcal {CN}}(0,\sigma_\nu^{2})$, where ${\sigma_\nu}^{2}$ is the noise variance. The sequence $h_{m,k}(\ell)$ represents the channel impulse response (CIR) between terminal $k$ and BS antenna $m$. Throughout this paper, we assume that the BS has a perfect knowledge of the CSI. The channel impulse responses are modeled as time-invariant filters with the length of $L$, and independent channel responses are assumed between each MT antenna and the BS antennas. The multipath channel tap $h_{m,k}(\ell)$, for $\ell \in \{0,1,\dots,L-1\}$, follows the ${\mathcal {CN}}(0,\rho(\ell))$ distribution, and different taps are assumed to be independent with each other. Here, $\rho(\ell)$ represents the power delay profile (PDP) of the channel model. Throughout this paper, normalized channel PDP is considered, i.e. $\sum_{\ell=0}^{L-1}\rho(\ell)=1$. We also assume that the average power of the signal transmitted by each MT is equal to one, i.e., $\mathbb{E}\{|x_k(\ell)|^2\}=1$.  Accordingly, $1/\sigma_\nu^2$ is the average signal-to-noise ratio (SNR) at the BS input.

In this paper, we assume OFDM modulation is used for data transmission with the total number of $N$ subcarriers. To increase the bandwidth efficiency, we do not insert CP/guard interval between the successive OFDM symbols. Therefore, the $i^{\rm th}$ OFDM symbol of terminal $k$ can be obtained as $\x_{k,i}=\bF_N^{\rm H}\bd_{k,i}$, where $\bF_N$ is the $N$-point normalized discrete Fourier transform (DFT) matrix, and $\bd_{k,i}=[d_{k,i}(0),\ldots,d_{k,i}(N-1)]^{\rm T}$ is the transmit data vector of terminal $k$ on symbol time index $i$. The elements of $\bd_{k,i}$ are independent and identically distributed (i.i.d.) zero-mean complex random variables with the variance of unity. Assuming that the number of transmitted OFDM symbols is $Q$, the vector of transmit signal of $k^{\rm th}$ terminal, $\x_k$, is obtained by concatenation of different OFDM symbols, i.e., $\x_k = [\x_{k,0}^{\rm T},\ldots,\x_{k,Q-1}^{\rm T}]^{\rm T}$.

\section{Frequency-Domain Combining Approach}\label{sec:ConventionalEq}

Conventionally, in CP-OFDM systems, MRC, ZF and MMSE combiners are applied in the frequency domain. With such a setup and in a large-scale multiuser MIMO scenario, the multiuser interference and noise effects average out as the number of BS antennas tends to infinity, \cite{Marzetta2010}. Hence, SINR increases without any bound as the number of BS antennas increases. In the case of interest to this paper, i.e., in the absence of CP, SINR saturation occurs, and thus, arbitrary large information rates cannot be achieved by increasing the BS antennas. In this section, we dig into the mathematical details that explain this limitation of the conventional frequency-domain combiners when applied to the OFDM without CP signal. In the next section, we introduce the TR combining as a remedy to this problem.

Let us consider the equalization of the $i^{\rm th}$ OFDM symbol. To this end, we form the $N \times 1$ vector $\br_{m,i} = [r_m(iN),\dots, r_m(iN+N-1)]^{\rm T}$ by considering the $i^{\rm th}$ segment of the signal $r_m(\ell)$, and follow \cite{Molisch2007,toeltsch2000efficient,lim2006mimo,chen2009low} to express (\ref{eqn:rm}) in the matrix form as 
\be\label{eqn:rmi}
\br_{m,i} = \sum^{K-1}_{k=0} \bH_{m,k}^{(i,i-1)}\x_{k,i-1}+\bH_{m,k}^{(i,i)}\x_{k,i}+{\bnu}_{m,i},
\ee
where,
\bse
\begin{align} \setlength{\arraycolsep}{2.5pt}
\footnotesize
 \bH_{m,k}^{(i,i-1)} \hspace{-3pt} = \hspace{-3pt}
 \begin{pmatrix}
  0 & \cdots & h_{m,k}(L-1) & h_{m,k}(L-2) & \cdots & h_{m,k}(1) \\
	0 & \cdots & 0   & h_{m,k}(L-1) & \cdots & h_{m,k}(2)  \\
  \vdots & \ddots  & \vdots & \vdots & \ddots & \vdots \\
  0      & \cdots  &  \cdots     & \cdots & \cdots & h_{m,k}(L-1)\\
	0      & \cdots & \cdots  &  \cdots     & \cdots & 0\\
	\vdots  & \vdots  & \vdots & \vdots & \ddots & \vdots \\
	0 & \cdots & \cdots  &  \cdots & \cdots & 0 
 \end{pmatrix}, \label{eqn:H0_ISI} 
\end{align}
\begin{align} \setlength{\arraycolsep}{2.5pt}
\footnotesize
 \bH_{m,k}^{(i,i)} \hspace{-2pt} = \hspace{-2pt}
 \begin{pmatrix}
  h_{m,k}(0) & 0   & 0 & \cdots & 0 \\
	h_{m,k}(1) & h_{m,k}(0) & 0 & \cdots & 0 \\
  \vdots  & \vdots  & \vdots & \ddots & \vdots \\
  h_{m,k}(L-1) & h_{m,k}(L-2) & h_{m,k}(L-3) & \cdots & 0 \\
	0 & h_{m,k}(L-1) & h_{m,k}(L-2) & \cdots & 0 \\
	\vdots  & \vdots  & \vdots & \ddots & \vdots \\
	0 & 0 & 0 & \cdots & h_{m,k}(0)  
 \end{pmatrix}. \label{eqn:H0_ICI}
\end{align}
\ese
The $N \times N$ convolution matrices $\bH_{m,k}^{(i,i-1)}$ and $\bH_{m,k}^{(i,i)}$, when multiplied to the vectors $\x_{k,i-1}$ and $\x_{k,i}$, create the tail of the symbol $i-1$ overlapping with $L-1$ samples in the beginning of the symbol $i$ and the channel affected symbol $i$, respectively. The vector ${\bnu}_{m,i}$ includes $N$ samples of the AWGN signal $\nu_m(\ell)$ at the position of symbol $i$.

Next, the received signals at different BS antennas are passed through OFDM demodulators (DFT blocks), and then, the outputs of the DFT blocks across different BS antennas are combined using the frequency-domain channel coefficients between the terminals and BS antennas. To cast this procedure into a mathematical formulation and pave the way for our analysis, we obtain the output of the OFDM demodulator at BS antenna $m$ as
\begin{align}\label{eqn:rmibar}
\bar{\br}_{m,i} \nonumber &= \hspace{-3pt} \sum^{K-1}_{k=0} \hspace{-2pt} \Big(\bF_N\bH_{m,k}^{(i,i-1)} \x_{k,i-1}+\bF_N\bH_{m,k}^{(i,i)}\x_{k,i}\Big)+\bF_N{\bnu}_{m,i} \nonumber \\ 
&= \hspace{-3pt} \sum^{K-1}_{k=0} \hspace{-2pt} \Big(\bF_N\bH_{m,k}^{(i,i-1)} \bF_N^{\rm H}\bd_{k,i-1} \hspace{-2pt} + \hspace{-2pt} \bF_N\bH_{m,k}^{(i,i)} \bF_N^{\rm H}\bd_{k,i}\Big) \hspace{-3pt} + \hspace{-2pt} \bar{\bnu}_{m,i} \nonumber\\
&= \hspace{-3pt} \sum^{K-1}_{k=0} \hspace{-2pt} \Big(\xbar{\bH}_{m,k}^{(i,i-1)} \bd_{k,i-1} + \xbar{\bH}_{m,k}^{(i,i)} \bd_{k,i}\Big)+\bar{\bnu}_{m,i},
\end{align}
where the bar symbol in $\bar{\br}_{m,i}$ and $\bar{\bnu}_{m,i}$ is to indicate that they are in the frequency domain. Similarly, the matrices $\xbar{\bH}_{m,k}^{(i,i-1)} \triangleq \bF_N\bH_{m,k}^{(i,i-1)} \bF_N^{\rm H}$ and $\xbar{\bH}_{m,k}^{(i,i)} \triangleq \bF_N\bH_{m,k}^{(i,i)} \bF_N^{\rm H}$ are the frequency-domain intersymbol and intercarrier interference matrices, respectively. Note that in the case of CP-OFDM transmission, $\xbar{\bH}_{m,k}^{(i,i-1)}=\bzero_{N \times N}$ and $\xbar{\bH}_{m,k}^{(i,i)}$ is a diagonal matrix with the diagonal entries given by the frequency-domain channel coefficients, i.e., $\big[\xbar{\bH}_{m,k}^{(i,i)}\big]_{pp} = \bar{h}_{m,k}(p) \triangleq \sum_{\ell = 0}^{L-1} h_{m,k}(\ell) e^{-j\frac{2\pi \ell p}{N}}$.

Let $\BW_p$ be the $M\times K$ combining matrix corresponding to subcarrier $p\in\{0,\ldots,N-1\}$, and the $M \times 1$ vector $\bar{\br}_i(p) = [\bar{r}_{0,i}(p),\dots,\bar{r}_{M-1,i}(p)]^{\rm T}$ contain the $p^{\rm th}$ outputs of the DFT blocks at different BS antennas. Accordingly, the output of the combiner can be obtained as
\be \label{eqn:dhati}
\hat{\bd}_i(p) = \BW_p^{\rm H}  \hspace{2pt} \bar{\br}_i(p) ,
\ee
where the $K \times 1$ vector $\hat{\bd}_i(p) = [\hat{d}_{0,i}(p),\ldots,\hat{d}_{K-1,i}(p)]^{\rm T}$ contains the detected symbols of all terminals at subcarrier $p$ and time index $i$. We consider three conventional linear combiners, namely, MRC, ZF and MMSE. For these combiners, we have
\be \label{eqn:mrc_zf_mmse}
\BW_p = 
\begin{cases}
	\xbar{\bH}_p \bD_p^{-1} ,  & {\rm for~ MRC}, \\
	\xbar{\bH}_p \left( \xbar{\bH}_p^{\rm H} \xbar{\bH}_p \right)^{-1},	   & {\rm for~ ZF}, \\
	\xbar{\bH}_p \left( \xbar{\bH}_p^{\rm H} \xbar{\bH}_p + \sigma_\nu^2 \eye_K \right)^{-1},	   & {\rm for~ MMSE}, 
\end{cases}
\ee
where $\xbar{\bH}_p$ is the $M\times K$ matrix of frequency-domain channel coefficients for the $p^{\rm th}$ subcarrier, i.e., $\left[\xbar{\bH}_p\right]_{mk} = \bar{h}_{m,k}(p) \triangleq \sum_{\ell = 0}^{L-1} h_{m,k}(\ell) e^{-j\frac{2\pi \ell p}{N}}$. In the case of MRC, $\bD_p$ is a $K\times K$ diagonal matrix whose diagonal elements are formed by the diagonal elements of $\xbar{\bH}_p^{\rm H} \xbar{\bH}_p $. The role of $\bD_p$ is just to normalize the amplitude of the MRC output. Without this term, the amplitude grows linearly without a bound as the number of BS antennas increases.

We note that for large number of BS antennas $M$ and using the law of large numbers, $\frac{1}{M} \xbar{\bH}_p^{\rm H} \xbar{\bH}_p$ tends to $ \eye_K$, \cite{ngo2013energy}. Similarly, the matrix $\frac{1}{M} \bD_p$ tends to $\eye_K$ as the number of BS antennas increases. In light of this observation, in the following, to find the various interference terms in the large-antenna regime, we consider $\BW_p = \frac{1}{M} \xbar{\bH}_p$.

Following (\ref{eqn:rmibar}) and (\ref{eqn:dhati}), the detected symbol $\hat{d}_{k,i}(p)$ can be expressed as
\begin{align} \label{eqn:dhat_expanded}
&\hat{d}_{k,i}(p) = \nonumber \\
& \underbrace{ \mathcal{H}_{kk,pp}^{(i,i)}  d_{k,i}(p) \vphantom{\sum_{\substack{q = 0 \\ q \neq p}}}}_{\text{Desired Signal}} + \underbrace{\sum_{\substack{q = 0\\q \neq p}}^{N-1} \mathcal{H}_{kk,pq}^{(i,i)}  d_{k,i}(q)}_{\text{ICI}} + \underbrace{\sum_{q=0}^{N-1} \mathcal{H}_{kk,pq}^{(i,i-1)}  d_{k,i-1}(q) \vphantom{\sum_{\substack{q = 0 \\ q \neq p}}}}_{\text{ISI}} \nonumber \\
&+ \underbrace{\sum_{\substack{j = 0 \\ j \neq k}}^{K-1} \sum_{q=0}^{N-1} \Big( \mathcal{H}_{kj,pq}^{(i,i-1)}  d_{j,i-1}(q) + \mathcal{H}_{kj,pq}^{(i,i)}  d_{j,i}(q) \Big)}_{\text{MUI}} + \underbrace{\bar{\nu}_{k,i}(p) \vphantom{\sum_{\substack{j = 0 \\ j \neq k}}}}_{\text{Noise}} ,
\end{align}
where the interference coefficients $\mathcal{H}_{kj,pq}^{(i,i-1)}$ and $\mathcal{H}_{kj,pq}^{(i,i)}$ determine the amount of the interference from symbols $d_{j,i-1}(q)$ and $d_{j,i}(q)$, respectively, on the detected symbol $\hat{d}_{k,i}(p)$. These interference coefficients capture the effects of the combiner gains together with the ICI and ISI coefficients in (\ref{eqn:rmibar}). Mathematically, we can calculate $\mathcal{H}_{kj,pq}^{(i,i)}$ and $\mathcal{H}_{kj,pq}^{(i,i-1)}$ according to
\bse \label{eqn:lambdas_def}
\begin{align} 
\mathcal{H}_{kj,pq}^{(i,i)}  &= \hspace{-0mm} \frac{\bar{\bh}_{k}^{\rm H}(p)}{M}  \left[\left[\xbar{\bH}_{0,j}^{(i,i)}\right]_{pq}\hspace{-1.0mm},\ldots,\left[\xbar{\bH}_{M-1,j}^{(i,i)}\right]_{pq}\right]^{\rm T}, \\
\mathcal{H}_{kj,pq}^{(i,i-1)} &= \hspace{-0mm} \frac{\bar{\bh}_{k}^{\rm H}(p)}{M}  \left[\left[\xbar{\bH}_{0,j}^{(i,i-1)}\right]_{pq}\hspace{-1.0mm},\ldots,\left[\xbar{\bH}_{M-1,j}^{(i,i-1)}\right]_{pq}\right]^{\rm T} ,
\end{align}
\ese
respectively. The $M \times 1$ vector $\bar{\bh}_{k}(p) = [\bar{h}_{0,k}(p),\dots,\bar{h}_{M-1,k}(p)]$ is the $k^{\rm th}$ column of the matrix $\xbar{\bH}_p$ containing the frequency-domain channel coefficients between terminal $k$ and different BS antennas.

Before we proceed, we review some results from probability theory. Let $\ba = [a_1,\dots,a_n]^{\rm T}$ and $\bb = [b_1,\dots,b_n]^{\rm T}$ be two random vectors each  containing i.i.d. elements. Furthermore, assume that $i^{\rm th}$ elements of $\ba$ and $\bb$ are correlated according to $\mathbb{E}\big\{ a_i^* b_i \big\} = C_{ab}$, $i = 1,\dots,n$.  Then, according to the law of large numbers, the sample mean $\frac{1}{n} \ba^{\rm H} \bb = \frac{1}{n}\sum_{i=1}^n a_i b_i$ converges almost surely to the distribution mean $C_{ab}$ as $n$ tends to infinity, i.e.,
\be 
\frac{1}{n} \ba^{\rm H} \bb  \rightarrow C_{ab}  \hspace{5pt} \text{ as } \hspace{5pt} n \rightarrow \infty , 
\ee
with almost sure convergence.

Using the law of large numbers, the interference coefficients given in (\ref{eqn:lambdas_def}) converge almost surely to the following values as the number of BS antennas $M$ tends to infinity.
\bse \label{eqn:lambda_asym}
\bea
\mathcal{H}_{kj,pq}^{(i,i)}  &\rightarrow&  \mathbb{E} \Big\{ \bar{h}_{m,k}^\ast(p) \left[\xbar{\bH}_{m,j}^{(i,i)}\right]_{pq} \Big\}, \label{eqn:lambda_asym_a}\\
\mathcal{H}_{kj,pq}^{(i,i-1)} &\rightarrow&  \mathbb{E} \Big\{ \bar{h}_{m,k}^\ast(p) \left[\xbar{\bH}_{m,j}^{(i,i-1)}\right]_{pq} \Big\}. \label{eqn:lambda_asym_b}
\eea
\ese
Note that the asymptotic values in (\ref{eqn:lambda_asym_a}) and (\ref{eqn:lambda_asym_b}) are the statistical correlation of the combiner tap value $\bar{h}_{m,k}(p)$ with the interference components $\left[\xbar{\bH}_{m,j}^{(i,i)}\right]_{pq}$ and $\left[\xbar{\bH}_{m,j}^{(i,i-1)}\right]_{pq}$, respectively.

In the Appendix, we have simplified the expressions in (\ref{eqn:lambda_asym}). The result is that as $M$ grows large, the coefficients $\mathcal{H}_{kj,pq}^{(i,i)}$ and $\mathcal{H}_{kj,pq}^{(i,i-1)}$ for $k \neq j$ tend to zero. Accordingly, the MUI term in (\ref{eqn:dhat_expanded}) fades away asymptotically. On the other hand, the ICI and ISI terms remain as specified according to the following coefficients:
\bse\label{eqn:lambdas_simplified} 
\begin{alignat}{2}
&\mathcal{H}_{kk,pp}^{(i,i)}   && \rightarrow~ 1-\frac{{\tau_{\rm av}}}{N}, \label{eqn:lambdas_simplified_a}\\
&\mathcal{H}_{kk,pq}^{(i,i)}   && \rightarrow~ \frac{1-\bar{\rho}(q-p)}{N(1-e^{j\frac{2\pi(q-p)}{N}})}, ~~~ ({\rm for}~ p\neq q), \label{eqn:lambdas_simplified_b} \\
&\mathcal{H}_{kk,pp}^{(i,i-1)} && \rightarrow~ \frac{{\tau_{\rm av}}}{N}, \label{eqn:lambdas_simplified_c} \\
&\mathcal{H}_{kk,pq}^{(i,i-1)} && \rightarrow~ \frac{\bar{\rho}(q-p)-1}{N(1-e^{j\frac{2\pi (q-p)}{N}})}, ~~~ ({\rm for}~ p\neq q), \label{eqn:lambdas_simplified_d}
\end{alignat}
\ese
where ${\tau_{\rm av}} \triangleq \sum^{L-1}_{\ell=0}\ell \rho(\ell)$, is the average delay spread of the channel, and $\bar{\rho}(q) \triangleq \sum_{\ell=0}^{N-1} \rho(\ell) e^{-j\frac{2\pi \ell q}{N}}$.

\begin{proposition} \label{prp:saturation1}
In the absence of CP and with the conventional MRC, ZF, or MMSE combiners, as the number of BS antennas tends to infinity, SINR for each terminal converges almost surely to
\bea\label{eqn:SIR_sat}
{\rm SINR} ~ \rightarrow ~ \frac{\big(1-\frac{\tau_{\rm av}}{N}\big)^2}{\big(\frac{\tau_{\rm av}}{N}\big)^2+\sum\limits^{N-1}_{\eta=1}\frac{|1-\bar{\rho}(\eta)|^2}{2N^2\sin^2(\pi \eta/N)}}.
\eea
Hence, SINR saturation occurs and arbitrary large SINR values cannot be achieved by increasing the number of BS antennas.
\end{proposition}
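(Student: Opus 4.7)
The plan is to start from the decomposition (\ref{eqn:dhat_expanded}) of $\hat{d}_{k,i}(p)$ into desired signal, ICI, ISI, MUI, and noise, substitute the asymptotic interference coefficients (\ref{eqn:lambdas_simplified}), and then assemble the signal and interference powers into the ratio (\ref{eqn:SIR_sat}). Since the data symbols $d_{j,i}(q)$ are i.i.d.\ with zero mean and unit variance and are independent across $(j,i,q)$, the power of each component reduces to the sum of squared magnitudes of its coefficients. The noise term $\bar{\nu}_{k,i}(p)$ is multiplied by $\frac{1}{M}\bar{\bh}_k^{\rm H}(p)$, so its variance scales as $\sigma_\nu^2/M$ and vanishes almost surely as $M\to\infty$; similarly, the Appendix gives that $\mathcal{H}_{kj,pq}^{(i,i)}$ and $\mathcal{H}_{kj,pq}^{(i,i-1)}$ for $k\neq j$ converge to zero, eliminating the MUI contribution. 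Hence only the self-terms in (\ref{eqn:lambdas_simplified_a})--(\ref{eqn:lambdas_simplified_d}) remain in the limit.

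Next, I would read off the desired-signal power as $|\mathcal{H}_{kk,pp}^{(i,i)}|^2 \to \bigl(1-\tfrac{\tau_{\rm av}}{N}\bigr)^2$ from (\ref{eqn:lambdas_simplified_a}), and collect the interference power as
\begin{equation}
P_{\rm int}= \Bigl|\mathcal{H}_{kk,pp}^{(i,i-1)}\Bigr|^2+\!\!\sum_{\substack{q=0\\q\neq p}}^{N-1}\!\!\Bigl(\bigl|\mathcal{H}_{kk,pq}^{(i,i)}\bigr|^2+\bigl|\mathcal{H}_{kk,pq}^{(i,i-1)}\bigr|^2\Bigr).
\end{equation}
The first term gives $(\tau_{\rm av}/N)^2$ directly from (\ref{eqn:lambdas_simplified_c}). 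For the remaining sums I would substitute (\ref{eqn:lambdas_simplified_b}) and (\ref{eqn:lambdas_simplified_d}); the key observation is that these two coefficients differ only by a sign, so their squared magnitudes are identical and the ICI and off-diagonal ISI contributions combine into
\begin{equation}
\sum_{\substack{q=0\\q\neq p}}^{N-1}\frac{2\,|1-\bar{\rho}(q-p)|^2}{N^2\,|1-e^{j2\pi(q-p)/N}|^2}.
\end{equation}

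I would then apply the identity $|1-e^{j\theta}|^2=4\sin^2(\theta/2)$ with $\theta=2\pi(q-p)/N$, which turns each denominator into $4\sin^2(\pi(q-p)/N)$, and change the summation index to $\eta=(q-p)\bmod N$ running over $1,\ldots,N-1$ (the summand depends on $\eta$ only through $\bar{\rho}(\eta)$ and $\sin^2(\pi\eta/N)$, both $N$-periodic, so the shift is harmless). After cancelling the factor of 2, the off-diagonal contributions collapse to $\sum_{\eta=1}^{N-1}|1-\bar{\rho}(\eta)|^2/(2N^2\sin^2(\pi\eta/N))$, and adding back $(\tau_{\rm av}/N)^2$ produces exactly the denominator of (\ref{eqn:SIR_sat}). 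Dividing the signal power by $P_{\rm int}$ yields the claimed limit, and since the limit is a strictly positive deterministic constant independent of $M$, the SINR cannot be driven to infinity by increasing $M$, which establishes the saturation.

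The routine part is the algebra that collapses the two off-diagonal sums using the sign symmetry between (\ref{eqn:lambdas_simplified_b}) and (\ref{eqn:lambdas_simplified_d}); the only step requiring care is justifying that one may treat $\BW_p=\tfrac{1}{M}\xbar{\bH}_p$ uniformly across the three combiners. This is handled by the remark preceding the proposition: $\tfrac{1}{M}\xbar{\bH}_p^{\rm H}\xbar{\bH}_p\to\eye_K$ and $\tfrac{1}{M}\bD_p\to\eye_K$ almost surely, so the MRC, ZF, and MMSE post-multipliers in (\ref{eqn:mrc_zf_mmse}) all converge to $\tfrac{1}{M}\eye_K$ (for MMSE after absorbing the vanishing $\sigma_\nu^2\eye_K$), and Slutsky-type arguments transfer the almost-sure limits to the three combiners without modifying (\ref{eqn:SIR_sat}).
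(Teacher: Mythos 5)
Your proposal is correct and follows essentially the same route as the paper: vanishing MUI and noise via the Appendix and the law of large numbers, then forming the ratio of the desired-signal power to the ICI-plus-ISI power using (\ref{eqn:lambdas_simplified_a})--(\ref{eqn:lambdas_simplified_d}). The paper leaves the final algebra implicit, whereas you spell out the sign symmetry between (\ref{eqn:lambdas_simplified_b}) and (\ref{eqn:lambdas_simplified_d}) and the $|1-e^{j\theta}|^2=4\sin^2(\theta/2)$ identity, and both steps check out.
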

\begin{proof}
As the number of BS antennas $M$ tends to infinity, the coefficients $\mathcal{H}_{kj,pq}^{(i,i)}$ and $\mathcal{H}_{kj,pq}^{(i,i-1)}$ for $k \neq j$ tend to zero; see the Appendix. Hence, the contribution of multiuser interference becomes negligible. A similar argument can be developed for the noise contribution. Thus, the SINR of terminal $k$ at subcarrier $p$ is determined based on the ICI and ISI terms and can be calculated as 
\begin{equation} 
{\rm SINR}_{k,p} = \frac{\Big| \mathcal{H}_{kk,pp}^{(i,i)} \Big|^2}{ \sum_{\substack{ q=0 \\ q\neq p}}^{N-1} \Big| \mathcal{H}_{kk,pq}^{(i,i)} \Big|^2 + \sum_{q=0}^{N-1} \Big| \mathcal{H}_{kk,pq}^{(i,i-1)} \Big|^2}  .
\end{equation}
This reduces to (\ref{eqn:SIR_sat}), following (\ref{eqn:lambdas_simplified_a}) through (\ref{eqn:lambdas_simplified_d}) and noting that the asymptotic SINR value is equal for all terminals and all subcarriers.
\end{proof}

We note that although the analysis in this section was based on the OFDM without CP, one can follow a similar line of derivations to show that, in general, when insufficient CP lengths are utilized, the SINR saturation problem occurs.
\section{Time-Reversal and Equalization} \label{sec:TR}

As it was shown in the previous section, when CP is removed from the OFDM signal, the conventional frequency-domain combining methods lead to some residual ICI and ISI components that will not fade away even with infinite number of BS antennas. Consequently, SINR saturates at a certain deterministic level. In order to resolve this problem, in this section, we propose to use TR to combine the signals of different BS antennas in the \emph{time domain} instead of the \emph{frequency domain}. As it is shown in \cite{Pitarokoilis2012} for the case of single-carrier transmission, with TR combining, intersymbol interference and multiuser interference tend to zero as the number of BS antennas goes to infinity. Thus, arbitrarily large SINR values can be achieved by increasing the BS array size. However, as we show in this paper, performance of the conventional TR combining is rather limited due to the excessive amount of multiuser interference when the number of BS antennas is finite. We show that OFDM allows for a straightforward zero-forcing equalization to be utilized after the TR combining. With this approach, the MUI level is significantly reduced  and larger SINR values can be achieved compared to the conventional TR method, while the saturation problem is also resolved. A more detailed discussion on the TR-MRC and TR-ZF receivers is presented in the following subsections.

\subsection{TR-MRC} \label{sec:TRMRC}

In TR-MRC, for a given terminal, e.g. $k^{\rm th}$ terminal, the received signals at the BS antennas are first prefiltered with the time-reversed and conjugated versions of the CIRs between that terminal and the corresponding BS antennas. Then, the resulting signals are combined with each other. Using (\ref{eqn:rm}), this procedure can be mathematically written as
\begin{align}
r^{\tr}_{k}(\ell) &= \frac{1}{\sqrt{M}}\sum^{M-1}_{m=0} r_m(\ell) \star h_{m,k}^*(-\ell) \nonumber \\
%&= \frac{1}{\sqrt{M}} \sum^{M-1}_{m=0} \sum^{K-1}_{j=0} \big( x_j(\ell) \star h_{m,j}(\ell) \star h^*_{m,k}(-\ell) + \nu_m(\ell) \star h^{*}_{m,k}(-\ell) \big) \nonumber \\ 
&= \sum^{K-1}_{j=0} x_j(\ell) \star g_{kj}(\ell) + \nu^{\tr}_{k}(\ell), \label{eqn:rm2} 
\end{align}
where
\be \label{eqn:gkj}
g_{kj}(\ell) \triangleq \frac{1}{\sqrt{M}} \sum^{M-1}_{m=0} h_{m,j}(\ell) \star h^*_{m,k}(-\ell) ,
\ee
is the equivalent CIR after the TR operation. In particular, $g_{kj}(\ell)$, for $j \neq k$, is the cross-talk CIR between the terminals $k$ and $j$, and $g_{kk}(\ell)$ is the time-reversal equivalent CIR of terminal $k$. Also, $\nu^{\tr}_{k}(\ell) \triangleq \frac{1}{\sqrt{M}} \sum_{m=0}^{M-1} \nu_m(\ell) \star h^{*}_{m,k}(-\ell)$ is the noise contribution after the TR operation.

Here, we focus on the equalization of the $i^{\rm th}$ OFDM symbol. Hence, let the $N \times 1$ vector $\br^{\tr}_{k,i} = [r^{\tr}_{k}(iN),\dots,r^{\tr}_{k}(iN+N-1)]^{\rm T}$ contain the $i^{\rm th}$ segment of the signal $r^{\tr}_{k}(\ell)$. Accordingly, (\ref{eqn:rm2}) can be expressed in a matrix form as
\be\label{eqn:rmi1}
\br^{\tr}_{k,i} = \hspace{-0.15cm} \sum^{K-1}_{j=0}\hspace{-0.0cm} \big(\bG_{kj}^{(i,i-1)}\x_{j,i-1}+\bG_{kj}^{(i,i)}\x_{j,i}+\bG_{kj}^{(i,i+1)}\x_{j,i+1}\big)+\bnu^{\tr}_{k,i},
\ee
where the vector ${\bnu}_{k,i}^\tr$ includes $N$ samples of the AWGN signal $\nu_k^\tr(\ell)$ at the position of symbol $i$. The matrices $\bG_{kj}^{(i,i-1)}$, $\bG_{kj}^{(i,i)}$ and $\bG_{kj}^{(i,i+1)}$ are $N\times N$ convolution matrices comprising the ISI components due to the tail of the symbol $i-1$, the ICI components within the symbol $i$ and the ISI components originating from the beginning of the symbol $i+1$, respectively. The matrices $\bG_{kj}^{(i,i-1)}$ and $\bG_{kj}^{(i,i)}$ can be defined in a similar way as in (\ref{eqn:H0_ISI}) and (\ref{eqn:H0_ICI}), respectively. Here, we use the following compact notation:
\begin{alignat}{2}
&\bG_{kj}^{(i,i-1)} &&= \Toeplitz_{N \times N}\big(\big[g_{kj}(1),\ldots ,g_{kj}(L-1),\bzero_{1\times 2N-L}\big]^\mathrm{T}\big), \nonumber \\
&\bG_{kj}^{(i,i)}   &&= \Toeplitz_{N \times N}\big(\big[\bzero_{1\times N-L},\bg_{kj},\bzero_{1\times N-L}\big]^\mathrm{T}\big), \nonumber \\
&\bG_{kj}^{(i,i+1)} &&= \Toeplitz_{N \times N}\big(\big[\bzero_{1\times 2N-L},g_{kj}(1-L),\ldots ,g_{kj}(-1)\big]^\mathrm{T}\big), \label{eqn:Gtime}
\end{alignat}
where, $\bg_{kj} \triangleq \big[ g_{kj}(1-L),\dots,g_{kj}(L-1) \big]^{\rm T}$ contains the samples of the TR channel impulse response $g_{kj}(\ell)$. The notation $\bA = \Toeplitz_{M \times N} (\ba)$ for an ${(N+M-1) \times 1}$ vector $\ba$, represents an $M\times N$ Toeplitz matrix, in which $[\bA]_{mn} = [\ba]_{m-n+N}$. Accordingly, the vector $\ba$ is formed by starting from the top right element of $\bA$, going along the first row to the top left element and then going along the first column to the bottom left element.

Applying an $N$-point DFT block to $\br^\tr_{k,i}$, we obtain the following frequency-domain signal.
\begin{align}\label{eqn:rbarTR}
\bar{\br}_{k,i}^{\tr} \hspace{-2pt}= \hspace{-4pt}
%\sum_{j=0}^{K-1} \Big(\bF_N\bG_{kj}^{(i,i-1)}\x_{j,i-1} + \bF_N\bG_{kj}^{(i,i)}\x_{j,i} \nonumber \\
%&+~\bF_N\bG_{kj}^{(i,i+1)}\x_{j,i+1}\Big)+\bF_N\bnu^{\tr}_{k,i} \nonumber \\
%&= \sum_{j=0}^{K-1} \Big(\bF_N\bG_{kj}^{(i,i-1)}\bF_N^{\rm H}\bd_{j,i-1} +  \bF_N\bG_{kj}^{(i,i)}\bF_N^{\rm H}\bd_{j,i} \nonumber \\
%&+~\bF_N\bG_{kj}^{(i,i+1)}\bF_N^{\rm H}\bd_{j,i+1}\Big)+\bF_N\bnu^{\tr}_{k,i} \nonumber \\
\sum_{j=0}^{K-1} \hspace{-3pt} \Big( \xbar{\bG}_{kj}^{(i,i-1)} \bd_{j,i-1} \hspace{-2pt} + \hspace{-2pt} \xbar{\bG}_{kj}^{(i,i)} \bd_{j,i} \hspace{-2pt}+\hspace{-2pt} \xbar{\bG}_{kj}^{(i,i+1)} \bd_{j,i+1} \Big) \hspace{-4pt} + \hspace{-2pt}\bar{\bnu}_{k,i}^{\tr},
\end{align} 
where $\xbar{\bG}_{kj}^{(i,i-1)} \triangleq \bF_N\bG_{kj}^{(i,i-1)}\bF_N^{\rm H}$, $\xbar{\bG}_{kj}^{(i,i)} \triangleq \bF_N\bG_{kj}^{(i,i)}\bF_N^{\rm H}$, $\xbar{\bG}_{kj}^{(i,i+1)} \triangleq \bF_N\bG_{kj}^{(i,i+1)}\bF_N^{\rm H}$ and $\bar{\bnu}_{k,i}^{\tr} \triangleq \bF_N\bnu^{\tr}_{k,i}$. Let $\bar{r}_{k,i}^{\tt TR}(p)$ and $\bar{\nu}_{k,i}^{\tt TR}(p)$ be the $p^{\rm th}$ elements of the vectors $\bar{\br}_{k,i}^{\tr}$ and $\bar{\bnu}_{k,i}^{\tr}$, respectively. Moreover, we define $\mathcal{G}_{kj,pq}^{(i,i-1)}$, $\mathcal{G}_{kj,pq}^{(i,i)}$, and $\mathcal{G}_{kj,pq}^{(i,i+1)}$ as the elements $pq$ of the matrices $\xbar{\bG}_{kj}^{(i,i-1)}$, $\xbar{\bG}_{kj}^{(i,i)}$, and $\xbar{\bG}_{kj}^{(i,i+1)}$, respectively. Hence, $\bar{r}_{k,i}^{\tt TR}(p)$ can be expressed as in (\ref{eqn:rbarTR_expanded}) on the top of the next page. In Section \ref{sec:tr_sinr}, we analyze the interference terms given in (\ref{eqn:rbarTR_expanded}) and show that in this case, the SINR will grow without a bound as $M$ grows large. Consequently, the SINR saturation problem is resolved through deployment of TR-MRC.

\begin{figure*}[!t]
%\setcounter{equation}{24}
%\normalsize
\begin{align} \label{eqn:rbarTR_expanded}
\bar{r}_{k,i}^{\tt TR}(p) &= 
\underbrace{\mathcal{G}_{kk,pp}^{(i,i)} \hspace{1pt} d_{k,i}(p) \vphantom{\sum_{\substack{q = 0 \\ q \neq p}}}}_{\text{Desired Signal}} + \underbrace{\sum_{\substack{q = 0\\q \neq p}}^{N-1} \mathcal{G}_{kk,pq}^{(i,i)} \hspace{1pt} d_{k,i}(q)}_{\text{ICI}} + \underbrace{\sum_{q=0}^{N-1} \Big( \mathcal{G}_{kk,pq}^{(i,i-1)} \hspace{1pt} d_{k,i-1}(q) + \mathcal{G}_{kk,pq}^{(i,i+1)} \hspace{1pt} d_{k,i+1}(q) \Big) \vphantom{\sum_{\substack{q = 0 \\ q \neq p}}}}_{\text{ISI}} \nonumber \\
&+ \underbrace{\sum_{\substack{j = 0 \\ j \neq k}}^{K-1} \sum_{q=0}^{N-1} \Big( \mathcal{G}_{kj,pq}^{(i,i-1)} \hspace{1pt} d_{j,i-1}(q) + \mathcal{G}_{kj,pq}^{(i,i)} \hspace{1pt} d_{j,i}(q) + \mathcal{G}_{kj,pq}^{(i,i+1)} \hspace{1pt} d_{j,i+1}(q) \Big)}_{\text{MUI}} + \underbrace{\bar{\nu}_{k,i}^{\tt TR}(p) \vphantom{\sum_{\substack{j = 0 \\ j \neq k}}}}_{\text{Noise}} .
\end{align}
\hrulefill
\end{figure*}

\subsection{ZF Post-Equalization (TR-ZF)} \label{sec:TRZF}

As mentioned in Section \ref{sec:TRMRC}, the SINR saturation problem is resolved through deployment of TR-MRC. Hence, as the number of BS antennas grows large, the power of different interference terms tends to zero and arbitrarily large SINR values can be achieved. However, for \emph{finite} number of BS antennas, this receiver suffers from a significant amount of interference in multiuser networks. This is mainly due to the interference originating from the symbols of different terminals transmitted on the same time and frequency slots. To gain a better intuition, we note that the TR-MRC receiver can be analogous to the MRC receiver used in CP-OFDM systems. The MRC receiver is simple and allows for arbitrarily large SINR values in CP-OFDM systems by increasing the number of BS antennas. However, multiuser interference is an important issue in MRC. Therefore, to tackle the multiuser interference and improve the SINR, the ZF detector can be utilized. In light of this discussion, in the following, we consider the time-reversal technique and aim at designing an additional ZF step to reduce the residual interference in TR-MRC.

We utilize the structure of OFDM to design a multiuser equalizer after time-reversal. In particular, we consider each subcarrier individually, and apply a zero-forcing matrix to eliminate the interference coming from different terminals. To pave the way for the development of a zero-forcing matrix, we consider a given subcarrier $p$, and reformulate (\ref{eqn:rbarTR_expanded}) as follows. Let the vector $\bar{\br}_{i}^{\tr}(p) = [\bar{r}^{\tr}_{0,i}(p),\dots,\bar{r}^{\tr}_{K-1,i}(p)]^{\rm T}$ contain the $p^{\rm th}$ output of the DFT blocks for different the terminals. Similarly, we define the noise vector $\bar{\bnu}_i^{\tr}(p) = [\bar{\nu}_{0,i}^{\tr}(p),\dots,\bar{\nu}_{K-1,i}^{\tr}(p)]^{\rm T}$. To express different interference terms, we construct the $K \times K$ matrices $\BG_{pq}^{(i,i-1)}$, $\BG_{pq}^{(i,i)}$ and $\BG_{pq}^{(i,i+1)}$ according to $\left[\BG_{pq}^{(i,i-1)}\right]_{kj} = \mathcal{G}_{kj,pq}^{(i,i-1)}$, $\left[\BG_{pq}^{(i,i)}\right]_{kj} = \mathcal{G}_{kj,pq}^{(i,i)}$, and $\left[\BG_{pq}^{(i,i+1)}\right]_{kj} = \mathcal{G}_{kj,pq}^{(i,i+1)}$. Following the above definitions, we can rearrange (\ref{eqn:rbarTR_expanded}) as 
\begin{align} \label{eqn:rbar_p} 
\bar{\br}_{i}^{\tr}(p) =  \BG_{pp}^{(i,i)} \bd_i(p) + \bxi_i(p) ,
\end{align}
where,
$
\bxi_i(p) = \sum_{\substack{q = 0 \\ q \neq p}}^{N-1}  \BG_{pq}^{(i,i)} \bd_i(q) + \sum_{q=0}^{N-1} \big( \BG_{pq}^{(i,i-1)} \bd_{i-1}(q) + \BG_{pq}^{(i,i+1)} \bd_{i+1}(q) \big) +  \bar{\bnu}_i^{\tr}(p) . \nonumber
$
We note that the term $\BG_{pp}^{(i,i)} \bd_i(p)$ contains the desired signals as well as the interference from symbols of different terminals transmitted in the same time/frequency slot as the time/frequency of interest, i.e, $i$ and $p$. More specifically, the diagonal elements of $\BG_{pp}^{(i,i)}$ correspond to the desired signal terms and the off-diagonal elements correspond to the interference terms. This interference is significant and is a source of performance degradation in multiuser scenarios. Hence, we propose to utilize the following ZF equalizer to remove the interference corresponding to the off-diagonal elements of $\BG_{pp}^{(i,i)}$.
\begin{align} \label{eqn:d_trzf}
\hat{\bd}_i(p) &= \big( \BG_{pp}^{(i,i)} \big)^{-1} \hspace{3pt} \bar{\br}_i^\tr(p) \nonumber \\
&= \bd_i(p) + \big( \BG_{pp}^{(i,i)} \big)^{-1} \hspace{3pt} {\bxi}_i(p) .
\end{align}
This additional equalization step leads to a substantial SINR performance improvement compared to the conventional TR-MRC. This is theoretically and numerically evaluated in Sections \ref{sec:tr_sinr} and \ref{sec:numerical_results}, respectively. Fig.~\ref{fig:Implementation} illustrates the baseband system implementation of the TR combining with the proposed ZF post-equalization.

\begin{figure*}[!t]
\centering
\includegraphics[scale=1.1]{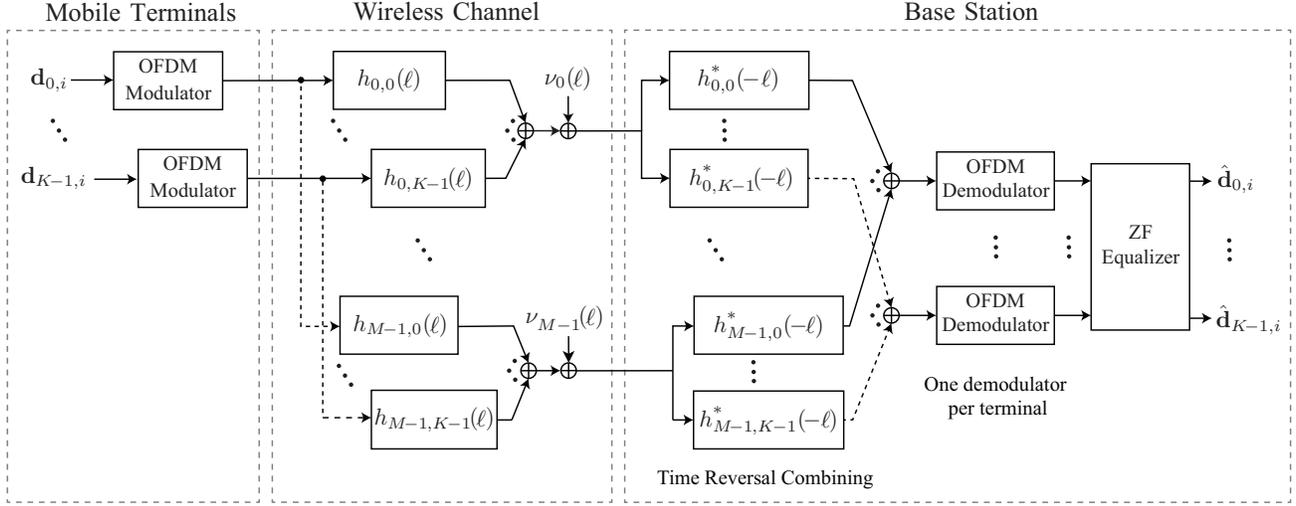} \vspace{-0.0cm}
\caption{Baseband system implementation of the proposed technique with TR-ZF receiver.}
\label{fig:Implementation}  \vspace{-0.0cm}
\end{figure*}

\section{Efficient Implementation and \\Complexity Analysis} \label{sec:implementation}

In this section, we study the computational complexity of the TR-MRC and TR-ZF receivers and compare the results with those of the conventional MRC and ZF methods utilized in CP-OFDM. The proposed receiver structures can be divided into two parts: (i) the TR part, and (ii) the post-equalization part. We discuss actions that should be taken to minimize the complexity of each part.

According to (\ref{eqn:rm2}), the TR part consists of a set of FIR filters whose complexity depends on the channel impulse responses between the BS antennas and the MTs. In particular, if the respective CIRs are sparse, i.e., are characterized by a small number of multipath components, one can directly implement the TR part in the time domain. However, in general, the direct implementation of (\ref{eqn:rm2}) may be computationally intensive in a wide-band OFDM transmission scenario, as the number of channel taps can be large.

Fortunately, the above issue can be resolved by utilizing the fast-convolution techniques such as \emph{overlap-add} and \emph{overlap-save},  \cite{oppenheim2010discrete}. Thus, the TR convolutions in (\ref{eqn:rm2}) are implemented efficiently in the frequency domain using the fast Fourier transform (FFT) algorithm. In the overlap-add and overlap-save methods, the processing is performed on a block-by-block basis, where each block is of length $\tilde{N}$ and is constructed from the samples of the input signal $r_m(l)$. Here, $\tilde{N}$ is a design parameter and is usually selected from the range $4L \leq \tilde{N} \leq 8L$ to minimize the computational cost. Accordingly, an $\tilde{N}$-point FFT is applied to each block to obtain the frequency-domain samples of the input signal. Then, these samples are multiplied with the respective frequency-domain channel coefficients. At this point, in order to minimize the number of required inverse FFT (IFFT) blocks, we can combine the signals corresponding to different BS antennas directly in the frequency domain, and then, apply a single IFFT block to the resulting signal to obtain the samples of $r_k^\tr(\ell)$. The above procedure significantly reduces the computational cost of the TR operation.

We now focus on the implementation of the second part, i.e., post-equalization. Direct calculation of the matrices involved in the ZF post-equalization introduced in Section \ref{sec:TRZF} imposes a substantial amount of computational burden to the system. In particular, considering a given subcarrier $p$, the matrix $\BG_{pp}^{(i,i)}$ should be computed to perform the ZF equalization according to (\ref{eqn:d_trzf}). The element $\left[\BG_{pp}^{(i,i)}\right]_{kj}$ is equal to the the $p^{\rm th}$ diagonal element of $\xbar{\bG}_{kj}^{(i,i)}$. Therefore, the direct approach requires the computation of the $p^{\rm th}$ diagonal elements of the matrices $\xbar{\bG}_{kj}^{(i,i)} = \bF_N \bG_{kj}^{(i,i)} \bF_N^{\rm H}$, for $k,j \in \{0,\dots,K-1\}$, to form the ZF post-equalization matrix. This involves a great number of calculations especially when the number of subcarriers is large. In particular, the number of complex multiplications using the direct method for all the subcarriers has a complexity that is of order $K^2N^3$. We denote this complexity by $\mathcal{O}(K^2N^3)$. Clearly, the direct method becomes computationally very expensive when $N$ is large. Fortunately, this issue can be resolved through the method that we introduce in the following.

\begin{table*}[!t] 
\renewcommand{\arraystretch}{1.4}
\caption{Computational Complexity of the Conventional MRC and ZF Detectors Utilized in CP-OFDM Systems.}
\label{tab:complexity1} \vspace{-0.0cm}
\centering
\begin{tabular}{|c|c|} \hline \hline
Technique & Number of Complex Multiplications \\
\hline \hline 
MRC    & $\frac{1}{2}QMN\log_2 N + QMNK$ \\ \hline
ZF     & $\frac{1}{2}QMN\log_2 N + QMNK + \frac{3}{2}MNK^2 + \frac{1}{3}NK^3$ \\ \hline \hline
\end{tabular} \vspace{-0.0cm}
\end{table*}

\begin{table*}[!t] 
\renewcommand{\arraystretch}{1.7}
\caption{Computational Complexity of Different Parts of the Receivers Proposed for OFDM without CP Systems.}
\label{tab:complexity2} \vspace{-0.0cm}
\centering
\begin{tabular}{|c|c|} \hline \hline
Technique & Number of Complex Multiplications \\
\hline \hline 
Time Reversal Combining  & $\frac{1}{2}\frac{QMN}{\tilde{N}-L+1} \tilde{N}\log_2 \tilde{N} + \frac{QMNK\tilde{N}}{\tilde{N}-L+1} + \frac{1}{2} \frac{QNK}{\tilde{N}-L+1} \tilde{N}\log_2 \tilde{N} + \frac{1}{2}QKN\log_2 N$ \\ \hline
ZF Post-Equalization  & $ \frac{K(K+1)}{2}M \tilde{N} + \frac{K(K+1)}{4}\tilde{N}\log_2 \tilde{N} + \frac{1}{2}K^2N\log_2 N + \frac{1}{3}NK^3 + QNK^2$  \\ \hline \hline
\end{tabular}  \vspace{-0.0cm}
\end{table*}

\begin{figure*}[!t] 
\centering
\subfigure[]{ \includegraphics[scale=0.57]{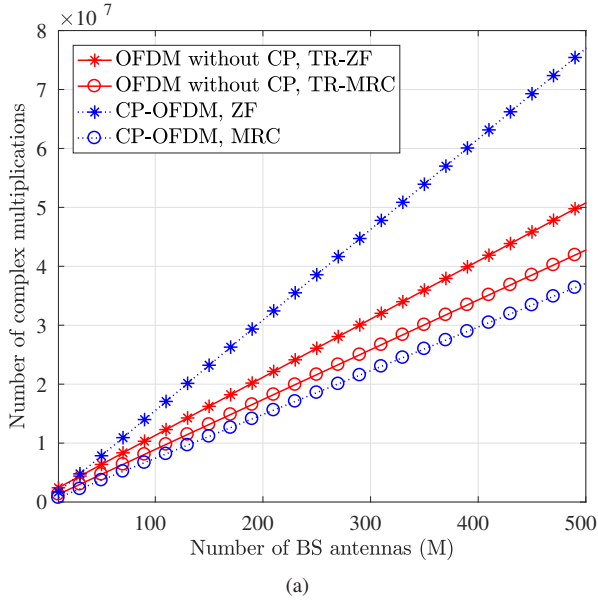}%
\label{fig:complexity1}} \hspace{0.7cm} %
\subfigure[]{ \includegraphics[scale=0.57]{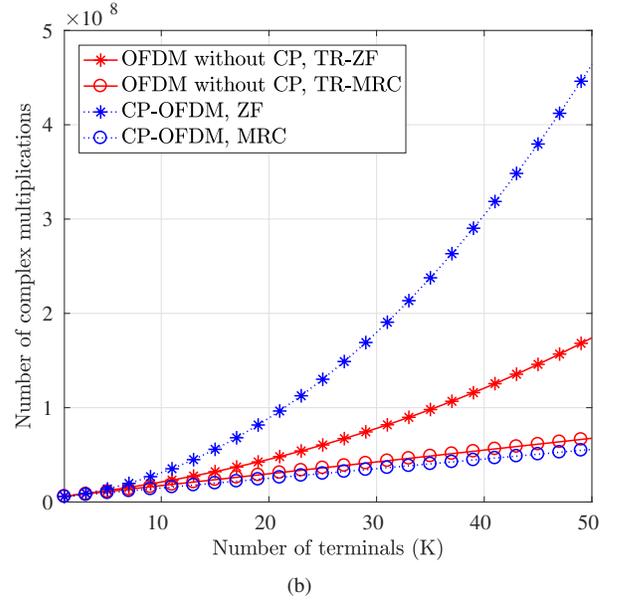}%
\label{fig:complexity2}}%
\caption{Computational complexity comparison of OFDM without CP with TR-MRC and TR-ZF techniques against CP-OFDM with MRC and ZF equalizations. Here, the following parameters are considered. $N=512$, $\tilde{N}=256$, $L=40$, and $Q=10$. In (a), $K=10$ is fixed and the value of $M$ is varied, whereas in (b), $M=200$ is fixed and the value of $K$ is varied.}
\label{fig:complexity}  \vspace{-0.0cm}
\end{figure*}

According to the expression $\xbar{\bG}_{kj}^{(i,i)} = \bF_N \bG_{kj}^{(i,i)} \bF_N^{\rm H}$, we have $\mathcal{G}_{kj,pp}^{(i,i)} = \bff_p^\mathrm{T} \bG_{kj}^{(i,i)} \bff_p^*$, where we recall that $\mathcal{G}_{kj,pp}^{(i,i)} \triangleq \left[ \xbar{\bG}_{kj}^{(i,i)} \right]_{pp}$. Therefore, one can obtain the element $\mathcal{G}_{kj,pp}^{(i,i)}$ as a linear combination of the samples of the equivalent CIR $g_{kj}(\ell)$, i.e., $\mathcal{G}_{kj,pp}^{(i,i)} = \sum_{\ell = -L+1}^{L-1}  \beta_\ell \hspace{2pt} g_{kj}(\ell)$, for some coefficients $\beta_\ell$. After some algebraic manipulations, the coefficients $\beta_\ell$ can be found as $\beta_\ell = \frac{N - |\ell |}{N} e^{-j \frac{2\pi \ell p}{N}}$. Hence, we have
\begin{align} \label{eqn:ZF_coeff_dft}
\mathcal{G}_{kj,pp}^{(i,i)} =  \frac{1}{N} \sum_{\ell = -L+1}^{L-1} \left(N - |\ell| \right) g_{kj}(\ell) e^{-j \frac{2\pi \ell p}{N} } .
\end{align}
Based on the above expression, $\mathcal{G}_{kj,pp}^{(i,i)}$ is equal to the $p^{\rm th}$ coefficient of the $N$-point DFT of the signal $g'_{kj}(\ell) \triangleq \frac{ N - |\ell| }{N}  g_{kj}(\ell) e^{j\frac{2\pi (L-1) p}{N} }$. Therefore, $\mathcal{G}_{kj,pp}^{(i,i)}$ can be computed efficiently using the FFT algorithm. Deploying this method, the number of complex multiplications needed for deriving the matrices involved in the ZF post-equalization is reduced to $\mathcal{O}\left(K^2N\log_2 N\right)$. As a result, a substantial computational complexity reduction is achieved.

We now compare the computational cost of TR-MRC and TR-ZF for OFDM without CP with those of the conventional MRC and ZF in CP-OFDM. In Table~\ref{tab:complexity1}, we have presented the number of complex multiplications needed to perform the MRC and ZF methods in CP-OFDM. Here, following our earlier notation, $Q$ represents the number of OFDM symbols. In Table ~\ref{tab:complexity1}, for both cases of MRC and ZF, the first and second terms represent the complexity due to the time-to-frequency conversion using $N$-point FFT blocks and frequency-domain combining, respectively. In the case of ZF, the third and fourth terms are due to the calculation of the ZF combining matrices $\BW_p = \xbar{\bH}_p (\xbar{\bH}_p^{\rm H} \xbar{\bH}_p)^{-1}$, $p\in \{0,\dots,N-1\}$. This needs to be calculated once for the transmitted packet consisting of $Q$ symbols.

Table~\ref{tab:complexity2} shows the number of complex multiplications needed to perform the TR combining and ZF post-equalization using the procedures discussed in this section. More specifically, the first three terms in the case of TR combining are due to the implementation of (\ref{eqn:rm2}) using fast-convolution as discussed above. Moreover, the fourth is arising from the calculation of $\bar{\br}^\tr_{k,i}$ from $\br^\tr_{k,i}$ using $N$-point FFT blocks. In the case of ZF post-equalization, the first two terms given in Table~\ref{tab:complexity2} account for the calculation of the equivalent channel responses $g_{kj}(\ell)$ given in (\ref{eqn:gkj}) using fast-convolution\footnote{Here, FFT size of $\tilde{N}$ is considered. Moreover, we have used the fact that $g_{jk}(\ell) = g_{kj}^*(-\ell)$ to reduce the number of computations.}.  The third term, i.e., $\frac{1}{2}K^2N\log_2 N$, is arising from the calculation of the coefficients $\mathcal{G}_{kj,pp}^{(i,i)}$ according to (\ref{eqn:ZF_coeff_dft}). The fourth term is due to the matrix inversion $\big( \BG_{pp}^{(i,i)} \big)^{-1}$. Finally, the last term accounts for the multiplication of the ZF equalization matrix to the input vector as in (\ref{eqn:d_trzf}).

Fig.~\ref{fig:complexity} compares the computational complexity of OFDM without CP with TR-MRC and TR-ZF techniques against CP-OFDM with MRC and ZF methods. Here, the following parameters are considered. $N=512$, $\tilde{N}=256$, $L=40$, and $Q=10$. In Fig.~\ref{fig:complexity1}, we have fixed $K=10$ and varied the value of $M$, whereas in Fig.~\ref{fig:complexity2}, $M=200$ is fixed and the value of $K$ is varied. As the figures show, while the complexity of MRC and TR-MRC are approximately the same, the TR-ZF receiver has a significantly lower computational cost compared to the ZF receiver. The reason for this is that the proposed ZF post-equalization takes place after multi-antenna combining; see Fig.~\ref{fig:Implementation}. Hence, the number of input signals to the ZF post-equalizer is significantly reduced as compared to the case of conventional ZF equalizer.

\section{Analysis of SINR and Achievable Rate} \label{sec:tr_sinr}

In this section, we analyze the SINR performance of both TR-MRC and TR-ZF receivers. This SINR analysis will ultimately lead us to find a lower-bound for the achievable information rate of each equalization technique.

\subsection{TR-MRC}

According to (\ref{eqn:rbarTR_expanded}), the SINR of the TR-MRC receiver can be calculated as
\be
\sinr_{k,p}^\trmrc = \frac{P_{k,p}^{\tt Sig} }{ P_{k,p}^{\tt ICI} + P_{k,p}^{\tt ISI} + P_{k,p}^{\tt MUI} + P_{k,p}^{\tt Noise} }, \label{eqn:TRMRC-SINR1} 
\ee
where $P_{k,p}^{\tt Sig} = \mathbb{E} \Big\{ \big| \mathcal{G}_{kk,pp}^{(i,i)} \big|^2 \Big\}$, $P_{k,p}^{\tt ICI} = \mathbb{E} \Big\{ \sum\limits_{\substack{q=0 \\ q \neq p}}^{N-1} \big| {\mathcal{G}}_{kk,pq}^{(i,i)} \big|^2 \Big\}$, $P_{k,p}^{\tt ISI} = \mathbb{E} \Big\{ \sum\limits_{q=0}^{N-1} \big( \big| {\mathcal{G}}_{kk,pq}^{(i,i-1)} \big|^2 + \big| {\mathcal{G}}_{kk,pq}^{(i,i+1)} \big|^2 \big) \Big\}$,  $P_{k,p}^{\tt MUI} = \mathbb{E} \Big\{ \sum\limits_{\substack{j=0 \\ j \neq k}}^{K-1} \sum\limits_{q=0}^{N-1} \big( \big| {\mathcal{G}}_{kj,pq}^{(i,i-1)} \big|^2 + \big| {\mathcal{G}}_{kj,pq}^{(i,i)} \big|^2 + \big| {\mathcal{G}}_{kj,pq}^{(i,i+1)} \big|^2 \big) \Big\}$, and $P_{k,p}^{\tt Noise} = \mathbb{E} \Big\{ \big| \bar{\nu}_{k,i}^\tr(p) \big|^2 \Big\}$. Using the channel model introduced in Section \ref{sec:SystemModel} and after some straightforward calculations, the average noise power can be obtained as $P_{k,p}^{\tt Noise} = \sigma_\nu^2$. In the following, in order to simplify the above SINR expression, we aim to analyze the interference coefficients ${\mathcal{G}}_{kj,pq}^{(i,i-1)}$, ${\mathcal{G}}_{kj,pq}^{(i,i)}$, and ${\mathcal{G}}_{kj,pq}^{(i,i+1)}$. By utilizing the Toeplitz structure of the matrices ${\bG}_{pq}^{(i,i-1)}$, ${\bG}_{pq}^{(i,i)}$ and ${\bG}_{pq}^{(i,i+1)}$, one can obtain these coefficients through the following expressions.
\begin{alignat}{4} \label{eqn:G}
&{\mathcal{G}}_{kj,pq}^{(i,i-1)}  &&= \bff_p^\mathrm{T} \bG_{kj}^{(i,i-1)} \bff_q^* &&= \ba_{pq}^\mathrm{H} \bg_{kj}, \nonumber \\
&{\mathcal{G}}_{kj,pq}^{(i,i)}    &&= \bff_p^\mathrm{T} \bG_{kj}^{(i,i)} \bff_q^*   &&= \bb_{pq}^\mathrm{H} \bg_{kj}, \nonumber \\
&{\mathcal{G}}_{kj,pq}^{(i,i+1)}  &&= \bff_p^\mathrm{T} \bG_{kj}^{(i,i+1)} \bff_q^* &&= \bc_{pq}^\mathrm{H} \bg_{kj},
\end{alignat}
where $\bff_p$ is the $p^{\rm th}$ column of the $N$-point DFT matrix, $\bF_N$, and the vector $\bg_{kj} \triangleq [g_{kj}(-L+1),\dots,g_{kj}(L-1)]^{\rm T}$ contains the samples of the TR channel impulse response $g_{kj}(\ell)$. Also, the vectors $\ba_{pq}$, $\bb_{pq}$, and $\bc_{pq}$ are determined by
\begin{align} \label{eqn:abc}
&\ba_{pq} = \frac{1}{\sqrt{N}} \Toeplitz_{L'\times N}\big(\big[\bzero_{1\times N+L-1},\omega_q^{N-1},\ldots,\omega_q^{N-L+1}\big]^\mathrm{T}\big) \bff_p^{*}, \nonumber \\
&\bb_{pq} = \frac{1}{\sqrt{N}} \Toeplitz_{L'\times N}\big(\big[\bzero_{1\times L-1},\omega_q^{N-1},\ldots,\omega_q^{0},\bzero_{1\times L-1}\big]^\mathrm{T}\big) \bff_p^{*},  \nonumber \\
&\bc_{pq} = \frac{1}{\sqrt{N}} \Toeplitz_{L'\times N}\big(\big[\omega_q^{L-2},\ldots ,\omega_q^{0},\bzero_{1\times N+L-1}\big]^\mathrm{T}\big) \bff_p^{*}, 
\end{align}
respectively, where $\omega_q \triangleq e^{-j\frac{2\pi q}{N}}$ and $L' \triangleq 2L-1$ is the length of the vector $\bg_{kj}$. We note that while the TR channel impulse response $\bg_{kj}$ has a random nature, the vectors $\ba_{pq}$, $\bb_{pq}$, and $\bc_{pq}$ are deterministic. Accordingly, in (\ref{eqn:G}), we have separated the random and deterministic parts of the coefficients ${\mathcal{G}}_{kj,pq}^{(i,i-1)}$, ${\mathcal{G}}_{kj,pq}^{(i,i)}$, and ${\mathcal{G}}_{kj,pq}^{(i,i+1)}$. This will help us to find their statistics.

Following the definition of the time-reversal equivalent channel response $g_{kj}(\ell)$ given in (\ref{eqn:gkj}), the mean of the complex random vector $\bg_{kj}$ can be obtained as
\be
\mathbb{E} \big\{\bg_{kj}\big\} = \sqrt{M} \delta_{kj}  {\bdelta}_{L'},  \label{eqn:gkj_mean}
\ee
where $\bdelta_{L'} \triangleq \left[\bzero_{1\times(L-1)},1,\bzero_{1\times(L-1)}\right]^{\rm T}$, and $\delta_{kj}$ is the Kronecker delta function. Moreover, the covariance matrix of $\bg_{kj}$ is calculated according to
\begin{align} \label{eqn:gkj_cov}
&\mathbb{E} \Big\{\Big(\bg_{kj}-\mathbb{E} \big\{\bg_{kj}\big\}\Big) \Big(\bg_{kj}-\mathbb{E} \big\{\bg_{kj}\big\}\Big)^{\mathrm{H}}\Big\} = \bGamma, 
\end{align}
where $\bGamma \triangleq \mathrm{diag}\{ \tilde{\brho} \}$. The elements in the vector $\tilde{\brho}=[\tilde{\rho}(-L+1),\dots,\tilde{\rho}(L-1)]^{\rm T}$ are obtained by convolving $\rho(\ell)$ by its time-reversed version, i.e., $\tilde{\rho}(i) = \sum_{\ell = 0}^{L-1} \rho(\ell) \rho(\ell - i)$.

According to (\ref{eqn:G}), the SINR expression given in (\ref{eqn:TRMRC-SINR1}) can be written as
\be \label{eqn:TRMRC-SINR2} 
\sinr^\trmrc_{k,p} = \frac{ \mathbb{E} \big\{ Q_{k,p}^{\tt Sig} \big\} }{ \mathbb{E} \big\{ Q_{k,p}^{\tt Intf} \big\} + \sigma_\nu^2} , 
\ee
where $Q^{\tt Intf}_{k,p} \triangleq \bg_{kk}^\mathrm{H} \bPhi_p \bg_{kk} + \sum_{\substack{j=0 \\ j \neq k}}^{K-1} \bg_{kj}^\mathrm{H} \bPsi_p \bg_{kj}$  includes the interference power due to the ICI, ISI, and MUI components, and $Q_{k,p}^{\tt Sig} \triangleq \bg_{kk}^{\rm H} \bB_p \bg_{kk}$ is the desired signal power. Here, $\bB_p = \bb_{pp} \bb_{pp}^{\rm H}$ and the matrices $\bPsi_p$, and $\bPhi_p$ are defined according to
\bse
\begin{align} 
\bPsi_p &= \sum_{q=0}^{N-1} \left( \ba_{pq}\ba_{pq}^\mathrm{H} + \bb_{pq}\bb_{pq}^\mathrm{H} + \bc_{pq}\bc_{pq}^\mathrm{H} \right), \\
\bPhi_p &= \sum_{\substack{q=0 \\ q\neq p}}^{N-1} \bb_{pq}\bb_{pq}^\mathrm{H} +  \sum_{q=0}^{N-1} \left( \ba_{pq}\ba_{pq}^\mathrm{H} + \bc_{pq}\bc_{pq}^\mathrm{H} \right) \nonumber \\ &= \bPsi_p - \bB_p ,
 \label{eqn:psi_p} 
\end{align}
\ese
respectively. We note that $Q_{k,p}^{\tt Intf}$ is a summations of $K$ \emph{quadratic} terms in the complex random vectors $\bg_{kj}$, $j \in \{0,\dots,K-1 \}$. Similarly, $Q_{k,p}^{\tt Sig}$ is quadratic in the complex random vector $\bg_{kk}$. 

\begin{proposition} \label{prp:SINR-TRMRC} 
In the absence of CP and with TR-MRC equalization, the SINR can be calculated as
\be \label{eqn:TRMRC-SINR3}
\sinr_{k,p}^\trmrc  = \frac{M + \lambda}{ K - \lambda + \sigma_\nu^2}  ,
\ee
where $\lambda \triangleq \sum\limits_{\ell = -L+1}^{L-1} \left( 1- \frac{|\ell|}{N} \right)^2 \tilde{\rho}(\ell)$. 
\end{proposition}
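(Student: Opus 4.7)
The plan is to reduce both the numerator and denominator of (\ref{eqn:TRMRC-SINR2}) to deterministic trace expressions by applying the quadratic-form identity $\mathbb{E}\{\bg^{\rm H}\bA\bg\}=\trace(\bA\bGamma)+\bmu^{\rm H}\bA\bmu$, valid for any complex random vector of mean $\bmu$ and covariance $\bGamma$, to each of the four quadratic forms that appear, and then to evaluate the resulting traces in closed form. From (\ref{eqn:gkj_mean}) and (\ref{eqn:gkj_cov}), every $\bg_{kj}$ has covariance $\bGamma={\rm diag}\{\tilde{\brho}\}$, with mean $\sqrt{M}\,\bdelta_{L'}$ when $j=k$ and zero otherwise. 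Substituting into (\ref{eqn:TRMRC-SINR2}) gives $\mathbb{E}\{Q_{k,p}^{\tt Sig}\}=\trace(\bB_p\bGamma)+M\,\bdelta_{L'}^{\rm H}\bB_p\bdelta_{L'}$ and $\mathbb{E}\{Q_{k,p}^{\tt Intf}\}=\trace(\bPhi_p\bGamma)+M\,\bdelta_{L'}^{\rm H}\bPhi_p\bdelta_{L'}+(K-1)\trace(\bPsi_p\bGamma)$.

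Two of the four scalars are immediate. The vector $\bdelta_{L'}$ encodes the impulsive equivalent channel $g(\ell)=\delta(\ell)$, for which TR-MRC produces no ICI or ISI, so $\bdelta_{L'}^{\rm H}\bB_p\bdelta_{L'}=1$ and $\bdelta_{L'}^{\rm H}\bPhi_p\bdelta_{L'}=0$. For $\trace(\bB_p\bGamma)=\bb_{pp}^{\rm H}\bGamma\bb_{pp}=\sum_\ell \tilde{\rho}(\ell)\,|[\bb_{pp}]_\ell|^2$, I would read the entries of $\bb_{pp}$ directly off (\ref{eqn:ZF_coeff_dft}), which can be rewritten as $\mathcal{G}_{kj,pp}^{(i,i)}=\bb_{pp}^{\rm H}\bg_{kj}$: comparing coefficients of $g_{kj}(\ell)$ yields $|[\bb_{pp}]_\ell|=1-|\ell|/N$, and the trace collapses to the definition of $\lambda$.

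The main obstacle is the remaining identity $\trace(\bPsi_p\bGamma)=1$; a direct expansion via the Toeplitz/DFT expressions in (\ref{eqn:abc}) is unpleasant. I would sidestep it by a Parseval/energy-conservation argument. Take the equivalent channel to be a single tap $g(\ell_0)=1$ and transmit three consecutive OFDM symbols carrying i.i.d.\ unit-variance data. The time-domain segment $\br_{k,i}^{\tr}$ is then a shifted concatenation of IDFT outputs whose per-sample variance is $1$, so unitarity of $\bF_N$ yields $\mathbb{E}\{|\bar{r}_{k,i}^{\tt TR}(p)|^2\}=1$ for every $p$. Matching this against (\ref{eqn:rbarTR_expanded}) forces $\sum_q\bigl(|[\ba_{pq}]_{\ell_0}|^2+|[\bb_{pq}]_{\ell_0}|^2+|[\bc_{pq}]_{\ell_0}|^2\bigr)=1$ for every $\ell_0$; weighting by $\tilde{\rho}(\ell_0)$ and summing gives $\trace(\bPsi_p\bGamma)=\sum_\ell \tilde{\rho}(\ell)=1$, the last equality holding because $\tilde{\rho}$ is the autocorrelation of the unit-sum PDP $\rho$. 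Assembling everything, the numerator becomes $M+\lambda$ and the denominator $(1-\lambda)+(K-1)+\sigma_\nu^2=K-\lambda+\sigma_\nu^2$, matching (\ref{eqn:TRMRC-SINR3}).
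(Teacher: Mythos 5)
Your proposal is correct and follows essentially the same route as the paper: both apply the quadratic-form mean identity to $Q_{k,p}^{\tt Sig}$ and $Q_{k,p}^{\tt Intf}$, use $\left[\bb_{pp}\right]_L=1$ (your impulse-channel argument) and $\left|\left[\bb_{pp}\right]_\ell\right|=1-|\ell|/N$ to get $\trace\{\bGamma\bB_p\}=\lambda$, and reduce the interference term via $\trace\{\bGamma\bPsi_p\}=1$. The only addition is your Parseval/energy-conservation argument for the unit diagonal of $\bPsi_p$, which is a valid (and welcome) justification of a step the paper merely asserts.
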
 
\begin{proof} 
According to (\ref{eqn:gkj_mean}) and (\ref{eqn:gkj_cov}), the mean value of the quadratic term $Q_{k,p}^{\tt Sig}$ can be obtained as, $\mathbb{E}\big\{ Q_{k,p}^{\tt Sig} \big\}  = M + \trace \left\{ \bGamma \bB_p \right\}$, \cite[p.~53]{mathai1992quadratic}, where we have used the fact that $\left[\bb_{pp}\right]_L = 1$. Similarly, by noting that $\left[\ba_{pq}\right]_L = \left[\bc_{pq}\right]_L = 0$ for any $p$ and $q$, and $\left[\bb_{pq}\right]_L = 0$ for $q \neq p$, we can find the mean of the quadratic expression $Q_{k,p}^{\tt Intf}$ as $\mathbb{E}\big\{ Q_{k,p}^{\tt Intf} \big\} = \trace \left\{ \bGamma \bPhi_p \right\} + (K-1) ~ \trace \left\{ \bGamma \bPsi_p \right\}$. To simplify this, we note that the diagonal elements of $\bPsi_p$ are all equal to one. Accordingly, $\trace \{{\bGamma} {\bPsi}_p \} = \trace \{{\bGamma} \}  = \sum_i \tilde{\rho}(i) = \sum_i \sum_\ell {\rho}(\ell) \rho(\ell - i) = 1$. Hence, $\mathbb{E}\big\{ Q_{k,p}^{\tt Intf} \big\} = K - \trace \left\{ \bGamma \bB_p \right\}$.  The value of $\trace \{ \bGamma \bB_p \}$ can be obtained as follows. From (\ref{eqn:abc}) we can find the elements of the vector $\bb_{pp}$ according to 
\begin{align}  
&\bb_{pp} = \nonumber \\
& \-\ \Big[ \frac{N-L+1}{N} e^{j \frac{2\pi}{N} (L-1) p},\frac{N-L+2}{N}e^{j \frac{2\pi}{N} (L-2) p},\dots,1, \nonumber \\ 
& \-\ \-\ \dots, \frac{N-L+2}{N}e^{j \frac{2\pi}{N} (2-L) p},\frac{N-L+1}{N} e^{j \frac{2\pi}{N} (1-L) p} \Big]^{\rm T}. \label{eqn:bpp}
\end{align}
Hence, $\lambda \triangleq \trace \{ \bGamma \bB_p \} = \sum_{\ell = -L+1}^{L-1} \left( 1- \frac{|\ell|}{N} \right)^2 \tilde{\rho}(\ell)$. This completes the proof.
\end{proof}

\begin{remark} 
The SINR gain of $\mathcal{O}(M)$ is achievable with TR-MRC and the SINR saturation problem is resolved.
\end{remark}

It is worth mentioning that the parameter $\lambda = \sum_{\ell = -L+1}^{L-1} \left( 1- \frac{|\ell|}{N} \right)^2 \tilde{\rho}(\ell)$ is a positive constant that depends on the channel PDP. Moreover, using $\sum_\ell \tilde{\rho}(\ell) = 1$, we can find that $\lambda$ is always less than or equal to one, i.e., $\lambda\leq 1$. When the channel length is much smaller than the symbol duration, i.e., $L \ll N$, we have $\left( 1- \frac{|\ell|}{N} \right)^2 \approx 1$ for $\ell \in \{-L+1,\cdots,L-1\}$. This leads to $\lambda \approx 1$. For a fixed channel PDP, as the symbol duration $N$ becomes smaller, the value of $\lambda$ decreases.

Using the result of the Proposition~\ref{prp:SINR-TRMRC}, a lower bound on the achievable information rate at the output of the TR-MRC equalizer can be obtained by considering the worst case uncorrelated additive noise. Assuming that terminals transmit Gaussian data symbols, it is proven in \cite{hassibi2003much} that the worst case uncorrelated noise is circularly symmetric Gaussian with the same variance as the effective additive noise. Accordingly, a lower bound on the achievable rate in the case of TR-MRC can be obtained as 
\be \label{eqn:rate_trmrc}
R_k^\trmrc =  \log_2 \bigg( 1 + \frac{M + \lambda}{K - \lambda + \sigma_\nu^2} \bigg) .
\ee
On the other hand, a lower bound on the achievable information rate of CP-OFDM transmission with MRC equalizer is given by, \cite{ngo2013energy,marzetta2016fundamentals},
\be \label{eqn:rate_cpofdm_mrc} 
R_k^{\substack{\cpofdm \\ \mrc}} = \frac{N}{N + L} \log_2 \left(1 + \frac{M-1}{K-1+\sigma^2_\nu} \right) ,
\ee
where the term $\frac{N}{N+L}$ represents the rate loss due to the CP overhead. In Section \ref{sec:numerical_results}, we numerically evaluate the rate given in (\ref{eqn:rate_trmrc}) and compare it against (\ref{eqn:rate_cpofdm_mrc}) as a benchmark.

Before we end our discussion in this section, we note that for large values of $M$ and $K$, we have $R_k^\trmrc \approx \log_2 \left( 1 + \frac{M}{K + \sigma_\nu^2} \right)$. This matches the achievable rate reported in \cite{Pitarokoilis2012} for the case of single-carrier transmission when TR-MRC is applied. This implies that when TR-MRC is utilized, and for large values of $M$ and $K$, the same information rate can be achieved either by the OFDM without CP or the single-carrier transmission.

\subsection{TR-ZF}

In the case of TR-ZF, the additional ZF equalization step removes a significant portion of the remaining interference after the TR operation. Here, we mathematically analyze the SINR and achievable rate performance of this scheme.

In order to find the SINR performance of the TR-ZF receiver, we focus on the ZF equalization matrix $\big( \BG_{pp}^{(i,i)} \big)^{-1}$. We note that $\mathcal{G}_{kj,pp}^{(i,i)}$ is the element $kj$ of the matrix $\BG_{pp}^{(i,i)}$. Moreover, according to (\ref{eqn:G}) and (\ref{eqn:gkj_mean}), $\mathcal{G}_{kj,pp}^{(i,i)}$ can be expressed as $\mathcal{G}_{kj,pp}^{(i,i)} = \sqrt{M} \delta_{kj} [\bb_{pp}]_{L} + \bb_{pp}^{\rm H} \tilde{\bg}_{kj}$, where $\tilde{\bg}_{kj} \triangleq \bg_{kj} - \mathbb{E}\{\bg_{kj}\}$. Furthermore, as calculated in (\ref{eqn:bpp}), the $L^{\rm th}$ entry of the vector $\bb_{pp}$ is equal to $[\bb_{pp}]_{L} = 1$. Based on the above analysis, we can express the matrix $\BG_{pp}^{(i,i)}$ as 
\be \label{eqn:G_delta}
\BG_{pp}^{(i,i)} = \sqrt{M}~ \eye_K + \boldsymbol{\Delta}_p ,
\ee
where the elements of the matrix $\boldsymbol{\Delta}_p$ can be obtained according to $\left[ \boldsymbol{\Delta}_p \right]_{kj} = \bb_{pp}^{\rm H} \tilde{\bg}_{kj}$. According to (\ref{eqn:G_delta}), as the number of BS antennas $M$ grows large, the matrix $\frac{1}{\sqrt{M}} \BG_{pp}^{(i,i)}$ converges almost surely to $\eye_K$. Hence, $\BG_{pp}^{(i,i)}$ is asymptotically well-conditioned, and its inverse $\big( \BG_{pp}^{(i,i)} \big)^{-1}$ tends to $\frac{1}{\sqrt{M}} \eye_K$ as $M$ grows large. Using this, the following proposition finds the asymptotic ($M\rightarrow \infty$) SINR in the case of TR-ZF.

\begin{proposition} \label{prp:SINR-TRZF}
In the absence of CP and with TR-ZF equalization, the SINR tends to
\be  \label{eqn:TRZF-SINR1}
\sinr_{k,p}^\trzf  = \frac{M}{K(1-\lambda) + \sigma_\nu^2},
\ee
as $M$ grows large. We recall that $\lambda \triangleq \sum\limits_{\ell = -L+1}^{L-1} \left( 1- \frac{|\ell|}{N} \right)^2 \tilde{\rho}(\ell) \leq 1$.
\end{proposition}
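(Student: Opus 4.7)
The plan is to exploit the decomposition $\BG_{pp}^{(i,i)} = \sqrt{M}\,\eye_K + \boldsymbol{\Delta}_p$ in (\ref{eqn:G_delta}), from which the text has already established that $\big(\BG_{pp}^{(i,i)}\big)^{-1}$ converges almost surely to $M^{-1/2}\eye_K$. Substituting into (\ref{eqn:d_trzf}) gives, in the large-$M$ limit, $\hat{\bd}_i(p)-\bd_i(p) \to M^{-1/2}\,\bxi_i(p)$, so the per-user effective interference-plus-noise power becomes $M^{-1}\,\mathbb{E}\big\{|\xi_{k,i}(p)|^2\big\}$ and the asymptotic SINR is $M/\mathbb{E}\big\{|\xi_{k,i}(p)|^2\big\}$. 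The proposition therefore reduces to computing a single second-moment quantity.

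Unfolding $\bxi_i(p)$ and using the representations in (\ref{eqn:G}), the $k$-th entry is a sum over all users $j$ of the ICI terms (frequencies $q\neq p$ of symbol $i$) and the ISI terms (all frequencies of symbols $i\pm 1$), plus the post-TR noise. By independence of the data symbols, the contribution of user $j$ to $\mathbb{E}\{|\xi_{k,i}(p)|^2\}$ is exactly the quadratic form $\mathbb{E}\{\bg_{kj}^{\mathrm{H}}\bPhi_p\bg_{kj}\}$, where the matrix is $\bPhi_p=\bPsi_p-\bB_p$ rather than $\bPsi_p$ as in the TR-MRC analysis; this is precisely because the $q=p$, $j\neq k$ MUI component that Proposition~\ref{prp:SINR-TRMRC} counted through $\bPsi_p$ is the term that ZF cancels.

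The evaluation then splits by user index. For $j\neq k$, the mean of $\bg_{kj}$ vanishes and the quadratic form reduces to $\trace\{\bGamma\bPhi_p\}=\trace\{\bGamma\bPsi_p\}-\trace\{\bGamma\bB_p\}=1-\lambda$, reusing the identities $\trace\{\bGamma\}=1$ and $\trace\{\bGamma\bB_p\}=\lambda$ from Proposition~\ref{prp:SINR-TRMRC}. For $j=k$, the nonzero mean $\sqrt{M}\,\bdelta_{L'}$ introduces an additional $M\,[\bPhi_p]_{LL}$, but this vanishes because $[\bPsi_p]_{LL}=1$ (diagonal entries of $\bPsi_p$ are all one) and $[\bB_p]_{LL}=|[\bb_{pp}]_L|^2=1$ via (\ref{eqn:bpp}), giving $[\bPhi_p]_{LL}=0$. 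The $j=k$ contribution is therefore also $1-\lambda$. Summing the $K$ user contributions together with $\sigma_\nu^2$ yields $\mathbb{E}\{|\xi_{k,i}(p)|^2\}=K(1-\lambda)+\sigma_\nu^2$, from which (\ref{eqn:TRZF-SINR1}) follows.

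The main subtlety will be justifying that the $\boldsymbol{\Delta}_p$ perturbation in the Neumann-type expansion $\big(\BG_{pp}^{(i,i)}\big)^{-1} = M^{-1/2}\big(\eye_K+M^{-1/2}\boldsymbol{\Delta}_p\big)^{-1}$ contributes vanishingly to the limiting interference power. Since the entries of $\boldsymbol{\Delta}_p$ are bounded zero-mean combinations of independent channel taps (so $\boldsymbol{\Delta}_p = \mathcal{O}(1)$ almost surely), the correction terms decay as $M^{-1/2}$ relative to the leading $M^{-1/2}\eye_K$ term, and are therefore absorbed in the asymptotic limit that defines the SINR in the proposition statement.
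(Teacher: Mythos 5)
Your proposal is correct and follows essentially the same route as the paper: use $\big(\BG_{pp}^{(i,i)}\big)^{-1}\rightarrow \frac{1}{\sqrt{M}}\eye_K$ from (\ref{eqn:G_delta}), so the residual error is $\frac{1}{\sqrt{M}}\bxi_i(p)$, and then evaluate $\mathbb{E}\{|\xi_{k,i}(p)|^2\}=K\,\trace\{\bGamma\bPhi_p\}+\sigma_\nu^2=K(1-\lambda)+\sigma_\nu^2$ exactly as in Proposition~\ref{prp:SINR-TRMRC}. Your explicit check that the $j=k$ mean term contributes $M[\bPhi_p]_{LL}=0$ is a detail the paper leaves implicit in the phrase ``using the same line of derivation,'' but it is the same argument.
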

\begin{proof}
According to (\ref{eqn:G_delta}), the ZF equalization matrix $\big( \BG_{pp}^{(i,i)} \big)^{-1}$ tends to $\frac{1}{\sqrt{M}} \eye_K$ as the number of BS antennas $M$ grows large. Therefore, the second term in (\ref{eqn:d_trzf}) tends to $\frac{1}{\sqrt{M}} ~ \bxi_i(p)$ asymptotically. We note that this term constitutes the residual interference after the TR-ZF equalization. Using the same line of derivation as in Proposition \ref{prp:SINR-TRMRC}, we can find the variance of the elements in $\bxi_i(p)$ as $\sigma_{\xi}^2 = K ~ \trace \{ \bGamma \bPhi_p \} + \sigma_\nu^2 = K (1 - \lambda) + \sigma_\nu^2 $. This leads to the SINR expression given in (\ref{eqn:TRZF-SINR1}).
\end{proof}

\begin{remark} 
Similar to the case of TR-MRC, the SINR gain of $\mathcal{O}(M)$ is achievable using TR-ZF receiver and the SINR saturation is avoided.
\end{remark}

The above result suggests that SINR saturation can be avoided through utilization of TR. The additional ZF equalization further improves the SINR level in multi-user systems.

According to (\ref{eqn:TRZF-SINR1}), a lower bound on the asymptotic achievable information rate at the output of the TR-ZF equalizer can be obtained as 
\begin{align} \label{eqn:rate_trzf} 
\tilde{R}_k^\trzf = \log_2 \bigg( 1 + \frac{M}{K(1-\lambda) + \sigma_\nu^2} \bigg) ,
\end{align}
where the tilde sign in $\tilde{R}$ signifies that it is an asymptotic information rate, i.e., it tends to the actual information rate as the number of BS antennas $M$ increases.
On the other hand, the achievable information rate of CP-OFDM transmission with ZF equalizer is given by, \cite{ngo2013energy,marzetta2016fundamentals},
\be \label{eqn:rate_cpofdm_zf} 
R_k^{\substack{\cpofdm \\ \zf}} = \frac{N}{N + L} \log_2 \left(1 + \frac{M - K}{\sigma^2_\nu} \right) ,
\ee 
where the term $\frac{N}{N+L}$ represents the rate loss due to the CP overhead. We note that comparing (\ref{eqn:rate_trzf}) and (\ref{eqn:rate_cpofdm_zf}) may not be fair as the former is derived using asymptotic analysis, and the latter is valid for finite values of $M$ as well. Hence, for the purpose of comparison, we also consider the asymptotic version of (\ref{eqn:rate_cpofdm_zf}) given by, \cite{ngo2013energy},
\be \label{eqn:rate_cpofdm_zf_asymp} 
\tilde{R}_k^{\substack{\cpofdm \\ \zf}} = \frac{N}{N + L} \log_2 \left(1 + \frac{M}{\sigma^2_\nu} \right) .
\ee 
In Section \ref{sec:numerical_results}, we numerically evaluate the rate given in (\ref{eqn:rate_trzf}) and compare it against (\ref{eqn:rate_cpofdm_zf_asymp}) as a benchmark.

\section{Numerical Results} \label{sec:numerical_results}

In this section, we evaluate the analyses and discussions of the previous sections through numerical simulations. We consider the Extended Typical Urban (ETU) channel model as defined in the long term evolution (LTE) standard, \cite{3gpp.36.101}. We adopt the LTE air interface parameters to OFDM without CP. Specifically, the OFDM useful symbol duration of $T = 66.7~\mu$s, which translates to the subcarrier spacing of $\Delta f = 15$ kHz is considered. Note that when considering OFDM without CP transmission, the useful symbol duration is equal to the total symbol duration, and delay spread of the ETU model covers about $7\%$ of the OFDM symbol duration. We choose the DFT size of $N=512$, and $300$ active subcarriers. This corresponds to the $5$ MHz bandwidth scenario defined in the LTE standard.

\begin{figure}[!t]
\centering
\includegraphics[scale=0.58]{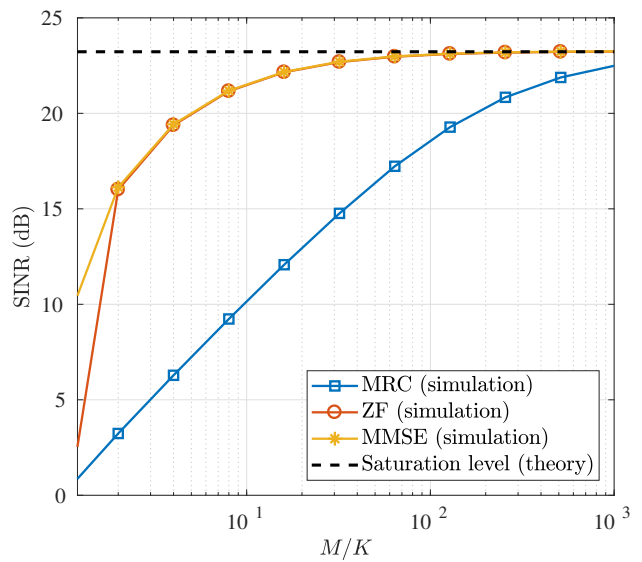} \vspace{-0.0cm}
\caption{SINR saturation in the case of conventional frequency-domain combiners. Here, $K=10$ terminals are considered and the number of BS antennas is varied. The SNR level is chosen to be $10$ dB. The saturation level is calculated using (\ref{eqn:SIR_sat}).}
\label{fig:sinr_comp_fc}
\end{figure}

\begin{figure}[!t]
\centering
\includegraphics[scale=0.58]{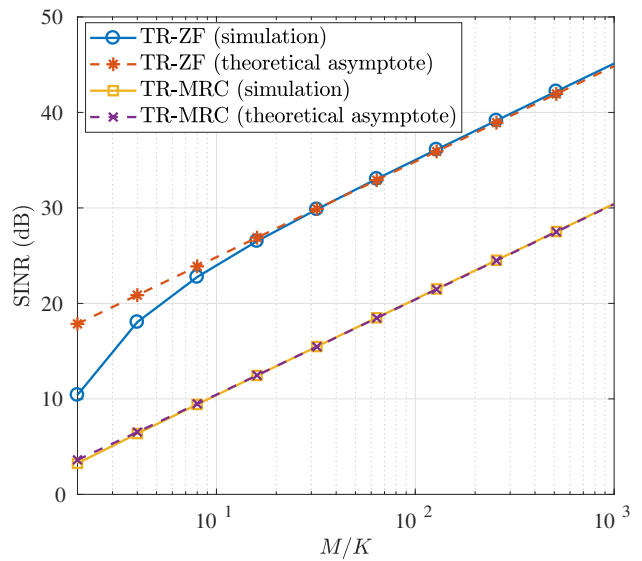} \vspace{-0.0cm}
\caption{SINR performance comparison for time reversal methods. Here, $K=10$ terminals are considered and the number of BS antennas is varied. The SNR level is chosen to be $10$ dB. Asymptotic theoretical SINR values are calculated according to (\ref{eqn:TRMRC-SINR3}) and (\ref{eqn:TRZF-SINR1}) for the cases of TR-MRC and TR-ZF, respectively. Using time reversal, arbitrarily large SINR values can be achieved by increasing the number of BS antennas.}
\label{fig:sinr_comp_tr}  \vspace{-0.0cm}
\end{figure}

\begin{figure*}[!t]
\centering
\subfigure[]{ \includegraphics[scale=0.58]{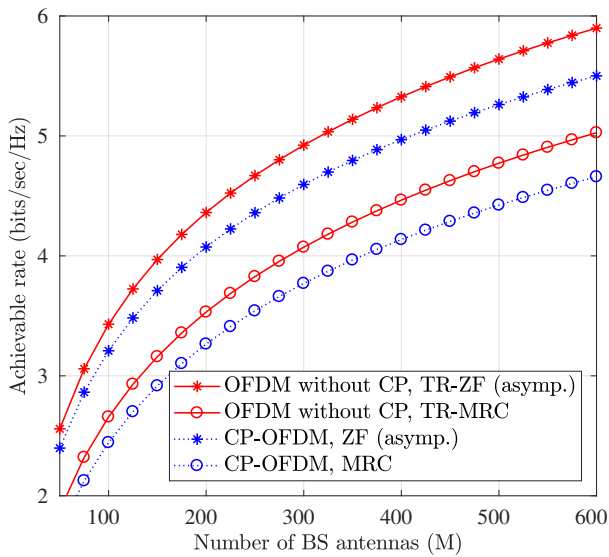}%
\label{fig:rate_M_K10}} \hspace{0.7cm}%
\subfigure[]{ \includegraphics[scale=0.58]{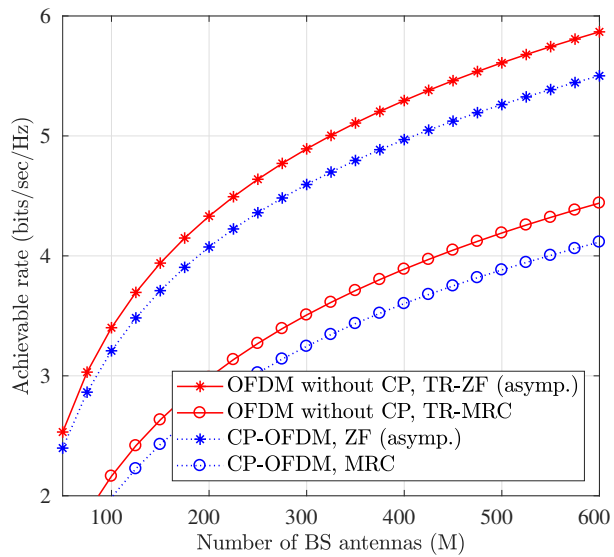}%
\label{fig:rate_M_K20}}%
\caption{Per user achievable information rate with and without the CP overhead. Here, the ratio $L/N$ is approximately 7\%, and the SNR level is chosen to be $-10$ dB. (a) $K=10$, (b) $K=20$ user terminals.}
\label{fig:rate_M}  \vspace{-0.0cm}
\end{figure*}

\begin{figure}[!t]
\centering
\includegraphics[scale=0.58]{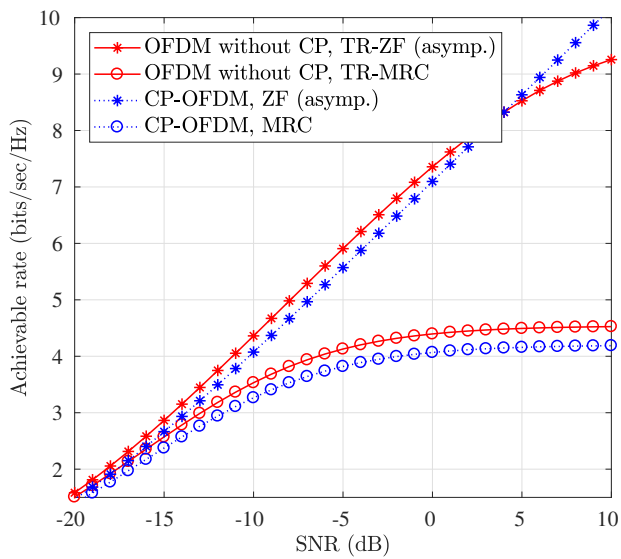} \vspace{-0.0cm}
\caption{Per user achievable information rate as a function of the SNR level. Here, the ratio $L/N$ is approximately 7\%, and $M=200$ BS antennas and $K=10$ terminals are considered.}
\label{fig:rate_SNR} \vspace{-0.cm}
\end{figure}

\begin{figure}[!t]
\centering 
\includegraphics[scale=0.58]{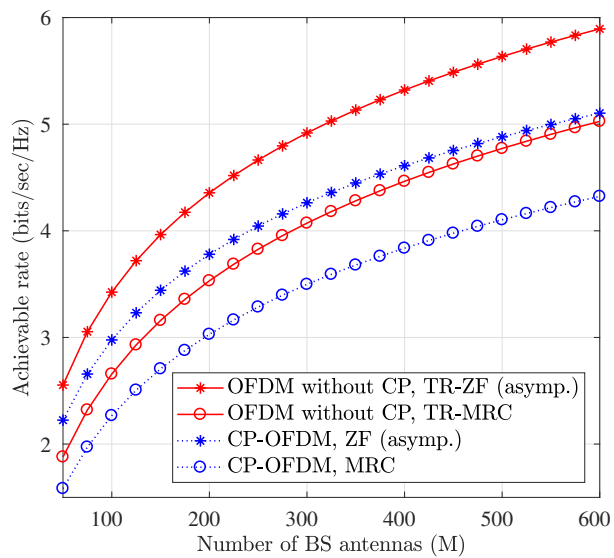} \vspace{-0.0cm}
\caption{Per user achievable information rate with and without the CP overhead. Here, $K = 10$ terminals are considered and the SNR level is chosen to be $-10$ dB. Moreover, the TDL-A channel with the RMS delay spread of $1100$~ns is assumed. In this channel model, the ratio $L/N$ is approximately 15\%.}
\label{fig:rate_M_TDL} \vspace{-0.0cm}
\end{figure}

We first evaluate the SINR performance of various methods discussed in this paper. In Fig.~\ref{fig:sinr_comp_fc}, we have demonstrated the SINR saturation of the conventional frequency-domain combining methods, namely MRC, ZF, and MMSE. In this experiment, $K=10$ active terminals are considered, and the noise level is chosen such that the average SNR at the input of the BS antennas is $10$ dB. We show the average SINR values over different channel realizations with the power delay profile of the ETU channel model. The saturation level is calculated using (\ref{eqn:SIR_sat}) and is compared with the simulated SINR values. As we expect, in all three frequency-domain combining methods, SINR does not improve beyond a certain deterministic level. As mentioned in Section \ref{sec:TR}, this problem can be resolved by using the TR technique. Fig.~\ref{fig:sinr_comp_tr} shows the SINR performance of TR-MRC and TR-ZF methods. Again, as expected, for both cases of TR-MRC and TR-ZF, SINR will grow unboundedly as the number of BS antennas grows. Moreover, since the proposed TR-ZF method significantly reduces the MUI level compared to the conventional TR-MRC technique, it yields to an improved SINR performance. In Fig.~\ref{fig:sinr_comp_tr},  the SNR at the input of the BS antennas is 10 dB. Moreover, we have also shown the theoretical SINR values calculated according to (\ref{eqn:TRMRC-SINR3}) and (\ref{eqn:TRZF-SINR1}) for the cases of TR-MRC and TR-ZF, respectively. As the number of BS antennas $M$ grows large, the simulated SINR values coincide with the values derived using asymptotic analysis in Section \ref{sec:tr_sinr}.

We next conduct an experiment to evaluate the achievable information rate with and without including the CP overhead. Fig.~\ref{fig:rate_M_K10} shows the theoretical achievable rate of OFDM without CP with TR-MRC and TR-ZF equalizers as well as that of CP-OFDM with MRC and ZF detectors. In the cases of OFDM without CP with TR-ZF and CP-OFDM with ZF equalizer, asymptotic rates given by (\ref{eqn:rate_trzf}) and (\ref{eqn:rate_cpofdm_zf_asymp}), respectively,  are considered.
In this experiment, $K=10$ terminals are considered and the noise level is chosen such that SNR at the input of the BS antennas is $-10$ dB. Fig.~\ref{fig:rate_M_K20} shows the results for the case where $K=20$ terminals are active. As shown in Figs.~\ref{fig:rate_M_K10} and \ref{fig:rate_M_K20}, with OFDM without CP and TR-MRC equalization, we can achieve a higher spectral efficiency as compared to in CP-OFDM with MRC equalizer. A similar argument applies for OFDM without CP with TR-ZF and CP-OFDM with ZF detector. Hence, as expected, by eliminating the CP overhead we can achieve a higher spectral efficiency compared with the conventional CP-OFDM systems. It should be noted that according to Figs.~\ref{fig:rate_M_K10} and \ref{fig:rate_M_K20}, for a fixed achievable rate performance, one can decrease the number of BS antennas (and hence the implementation cost) by removing the CP overhead.

In Fig.~\ref{fig:rate_SNR}, we compare the achievable rate performance of OFDM without CP and CP-OFDM for various levels of SNR. In this experiment, $M=100$ BS antennas and $K=10$ terminals are considered. As shown, for typical SNR levels, higher spectral efficiency can be achieved using OFDM without CP. On the other hand, in very low SNR regime, the noise level dominates the overall interference plus noise, and hence, similar rates can be achieved using OFDM with/without CP deploying various equalization methods. On the other hand, when the SNR level is high, the residual interference dominates the noise, hence the performance of OFDM without CP with TR-MRC/TR-ZF and CP-OFDM with MRC becomes saturated and does not improve with increasing the transmission power.

So far in this section, we considered the ETU channel model, which covers about $7\%$ of the OFDM symbol duration of $T = 66.7~\mu$s. In the next experiment, we aim to show the advantage of the elimination of CP in channels with larger delay spreads. Accordingly, we consider the TDL-A channel PDP with the RMS (root mean square) delay spread of $1100$~ns. This channel model has been recently proposed for the frequency spectrum above 6~GHz \cite{3gpp.38.900}, and covers about $15\%$ of the OFDM symbol duration. Fig.~\ref{fig:rate_M_TDL} shows the achievable rate comparison of OFDM without CP and CP-OFDM considering the above channel model. Here, $K=10$ terminals and the SNR level of $\text{SNR} = -10$~dB are considered. As shown, here due to a larger CP duration, the spectral efficiency is improved more considerably by eliminating the CP overhead.

\section{Conclusion} \label{sec:conclusion}

It is known that in massive MIMO channels uncorrelated noise and multiuser interference vanish as the number of BS antennas grows large. Motivated by this, in this paper, we studied OFDM without CP under such channels to investigate if the channel distortions (i.e., ISI and ICI) average out in the large antenna regime. To this end, we mathematically analyzed the asymptotic SINR performance of the conventional frequency-domain combining methods, i.e., MRC, ZF, and MMSE. Our analysis revealed that in these cases, there always exists some residual interference even for an infinite number of BS antennas leading to the saturation of the SINR performance. To solve this saturation issue, we proposed to use the TR technique. Moreover, we introduced a ZF equalization to be incorporated after the TR combining to further reduce the multiuser interference. We mathematically analyzed the asymptotic achievable information rate of the proposed receiver design. We showed that by removing the CP overhead and using the proposed technique, a higher spectral efficiency is achievable as compared to the conventional CP-OFDM systems, while the computational complexity is also reduced.

\appendix[Proof of the Results in (\ref{eqn:lambdas_simplified})]
\section{Proof of the Results in (\ref{eqn:lambdas_simplified})} \label{sec:appendix_freq}

The elements of $\xbar{\bH}_{m,k}^{(i,i-1)}$ and $\xbar{\bH}_{m,k}^{(i,i)}$ can be expanded as \cite{Molisch2007}
\be
\left[\xbar{\bH}_{m,k}^{(i,i-1)}\right]_{pq} \hspace{-0.15cm} = \hspace{-0.1cm}\frac{1}{N}\hspace{-0.1cm}\sum^{N-1}_{n=0}\sum^{L-1}_{\ell=0}\hspace{-0.05cm}h_{m,k}(\ell)e^{j\frac{2\pi}{N}(nq-\ell q-np)}w(n-\ell+N), \nonumber
\ee 
and
\be
\left[\xbar{\bH}_{m,k}^{(i,i)}\right]_{pq} = \frac{1}{N}\sum^{N-1}_{n=0}\sum^{L-1}_{\ell=0}h_{m,k}(\ell)e^{j\frac{2\pi}{N}(nq-\ell q-np)}w(n-\ell), \nonumber
\ee
where $w(n)$ is the windowing function, which is considered to be a rectangular window, i.e., $w(n)=
\bigg\{
	\begin{array}{ll}
	1, &0\leqslant n \leqslant N-1, \\
	0, & \rm{otherwise}.
	\end{array}$. Accordingly, following (\ref{eqn:lambda_asym}), we have
\begin{align}
& \mathcal{H}_{kk,pp}^{(i,i)} \rightarrow \mathbb{E}\bigg\{ \bar{h}_{m,k}^\ast(p) \left[ \xbar{\bH}^{(i,i)}_{m,k}\right]_{pp}\bigg\} \nonumber \\
& = \frac{1}{N}\mathbb{E} \bigg\{ \sum^{N-1}_{n=0}\sum^{L-1}_{\ell=0}\bar{h}_{m,k}^\ast(p) h_{m,k}(\ell)e^{-j\frac{2\pi}{N}\ell p}w(n-\ell) \bigg\} \nonumber \\
& = \frac{1}{N}\mathbb{E} \bigg\{ \sum^{N-1}_{n=0}\sum^{L-1}_{\ell=0}\sum^{L-1}_{\ell^{\prime}=0}{h}_{m,k}^\ast (\ell^{\prime})h_{m,k}(\ell)e^{j\frac{2\pi}{N}(\ell^{\prime}-\ell) p} w(n-\ell) \bigg\} \nonumber \\
& = \frac{1}{N}\sum^{L-1}_{\ell=0}(N-\ell)\rho(\ell) = 1-\frac{{\tau_{\rm av}}}{N}, \nonumber
\end{align}
and for $p\neq q$ we have,
\begin{align}
& \mathcal{H}_{kk,pq}^{(i,i)} \rightarrow \mathbb{E}\bigg\{ \bar{h}_{m,k}^\ast(p) \left[ \xbar{\bH}^{(i,i)}_{m,k}\right]_{pq}\bigg\} \nonumber \\
& = \frac{1}{N}\mathbb{E} \bigg\{ \sum^{N-1}_{n=0}\sum^{L-1}_{\ell=0}\sum^{L-1}_{\ell^{\prime}=0}{h}_{m,k}^\ast (\ell^{\prime})h_{m,k}(\ell) e^{j\frac{2\pi}{N}(nq - \ell q + \ell^{\prime}p - np)}  \nonumber \\
& \hspace{40pt} \times w(n-\ell) \bigg\} \nonumber \\
& = \frac{1}{N}\sum^{N-1}_{n=0}\sum^{L-1}_{\ell=0}\rho(\ell)e^{-j\frac{2\pi}{N}(\ell-n)(q-p)}w(n-\ell) \nonumber \\
& = \frac{1}{N}\sum^{L-1}_{\ell=0}\rho(\ell)e^{-j\frac{2\pi\ell (q-p)}{N}}\sum^{N-1}_{n=\ell}e^{j\frac{2\pi n(q-p)}{N}} \nonumber \\
& = -\frac{1}{N}\sum^{L-1}_{\ell=0}\rho(\ell)e^{-j\frac{2\pi\ell (q-p)}{N}}\frac{1-e^{j\frac{2\pi\ell (q-p)}{N}}}{1-e^{j\frac{2\pi (q-p)}{N}}} \nonumber \\
& = \frac{-1}{N(1-e^{j\frac{2\pi (q-p)}{N}})}\bigg(\sum^{L-1}_{\ell=0}\rho(\ell)e^{-j\frac{2\pi\ell (q-p)}{N}} -\sum^{L-1}_{\ell=0}\rho(\ell)\bigg) \nonumber \\
& = \frac{1-\bar{\rho}(q-p)}{N(1-e^{j\frac{2\pi (q-p)}{N}})}, \nonumber
\end{align}
where $\bar{\rho}(q) \triangleq \sum^{L-1}_{\ell=0}\rho(\ell)e^{-j\frac{2\pi\ell q}{N}}$. Similarly, the asymptotic value of the ISI coefficient $\mathcal{H}^{(i,i-1)}_{kk,pq}$ can be calculated as $\mathcal{H}_{kk,pp}^{(i,i-1)} \rightarrow \frac{{\tau_{\rm av}}}{N}$ and $\mathcal{H}_{kk,pq}^{(i,i-1)} \rightarrow \frac{\bar{\rho}(q-p)-1}{N(1-e^{j\frac{2\pi (q-p)}{N}})}$, when $p \neq q$. Moreover, with similar derivations it is possible to show that $\bar{h}_{m,k}(p)$ is uncorrelated with $\left[ \xbar{\bH}^{(i,i)}_{m,j}\right]_{pq}$ and $\left[ \xbar{\bH}^{(i,i-1)}_{m,j}\right]_{pq}$, when $k\neq j$. Accordingly, the MUI coefficients $\mathcal{H}^{(i,i)}_{kj,pq}$ and $\mathcal{H}^{(i,i-1)}_{kj,pq}$ tend to be zero as $M$ grows large.

\bibliographystyle{IEEEtran} 
\bibliography{IEEEabrv,MassiveMIMO}

\end{document}